\newtheorem{theorem}{Theorem}
\newtheorem{proposition}[theorem]{Proposition}
\theoremstyle{definition}
\newtheorem{assumption}[theorem]{Assumption}
\newtheorem{definition}[theorem]{Definition}
\newtheorem{example}[theorem]{Example}
\newtheorem{remark}[theorem]{Remark}
\newtheorem{notation}[theorem]{Notation}
\newcommand{\ol}{\overline} 
\newcommand{\olC}{\overline{C}} 
\newcommand{\poim}{\textrm{POIM} } 
\newcommand{\fpoim}{\mathit{PoIm} } 
\newcommand{\rel}{\mathit{Rel}} 
\newcommand{\gr}{\mathit{Gr}} 
\newcommand{\low}{\mathit{low}} 
\newcommand{\high}{\mathit{high}} 
\newcommand{\Match}{\mathit{Match}} 
\newcommand{\rhoP}{P} %?
\newcommand{\Tr}{\mathit{Tr}} 
\newcommand{\empw}[1]{#1}
\newcommand{\doc}[1]{\textit{#1}} 
\newcommand{\blank}[1]{$\_\colon\!$#1}
\newcommand{\iri}{\mathit{Iri}}
\newcommand{\lit}{\mathit{Lit}} 
\newcommand{\V}{V}
\newcommand{\B}{B}
\newcommand{\I}{I} 
\newcommand{\IB}{{IB}}
\newcommand{\IBV}{{IBV}} 
\newcommand{\IV}{{IV}} 
\newcommand{\upto}[1]{\stackrel{#1}{\to}} 
\newcommand{\from}{\leftarrow} 
\newcommand{\upfrom}[1]{\stackrel{#1}{\from}}
\newcommand{\verbsize}{\footnotesize}
\newcommand{\lupto}[1]{\stackrel{#1}{\longrightarrow}} 
\newcommand{\lupfrom}[1]{\stackrel{#1}{\longleftarrow}} 
\newcommand{\tgr}[2]{\mathbf{G}_{#1}(#2)}
\newcommand{\DGr}[1]{\mathbf{D}_{#1}}
\newcommand{\QGr}[1]{\mathbf{Q}_{#1}}
\begin{document}

\title{An Algebraic Graph Transformation Approach \\ for RDF and SPARQL}

\author{Dominique Duval 
\institute{
 LJK, CNRS and Univ. Grenoble Alpes, Grenoble, France
}
\email{Dominique.Duval@univ-grenoble-alpes.fr}
\and Rachid Echahed 
\institute{
LIG, CNRS and Univ. Grenoble Alpes, Grenoble, France
}
\email{Rachid.Echahed@univ-grenoble-alpes.fr}
\and Fr{\'e}d{\'e}ric Prost
\institute{
LIG, CNRS and Univ. Grenoble Alpes, Grenoble, France
}
\email{Frederic.Prost@univ-grenoble-alpes.fr}
}
\def\titlerunning{Algebraic Graph Transformation for RDF and SPARQL}

\def\authorrunning{D. Duval, R. Echahed \& F. Prost}

\maketitle

%\keywords{RDF graphs, SPARQL queries, Category theory, Graph transformations}%TODO mandatory; please add comma-separated list of keywords

\begin{abstract}
  We consider the recommendations of the World Wide Web Consortium
  (W3C) about RDF framework and its associated query language
  SPARQL. We propose a new formal framework based on category theory
  which provides clear and concise formal definitions of the main
  basic features of RDF and SPARQL. We define RDF graphs as well as
  SPARQL basic graph patterns as objects of some nested
  categories. This allows one to clarify, in particular, the role of
  blank nodes.  Furthermore, we consider basic SPARQL CONSTRUCT and
  SELECT queries and formalize their operational semantics following a
  novel algebraic graph transformation approach called POIM.
\end{abstract}

%-------------------------------------------------------------------------------
%-------------------------------------------------------------------------------
\section{Introduction}

\newcommand{\TD}{T_{data}}

Mathematical semantics of computer science languages has been
advocated since early 1970's. It allows one to give precise meaning of
syntactical objects and paves the way for involved reasoning methods
such as  modularity, compositionality, security and
verification techniques, to quote a few. 
Nowadays, graph databases are becoming a very influential technology in
our society.  Mastering programming languages involved in the
encoding of such graph data is a necessity to
elaborate robust modern data management systems.
Relational algebra \cite{Codd90} was the main mathematical foundation
underlying oldy SQL-like formalisms for databases.  However, with the
advent of new graph oriented formalisms such as the most recent
recommendations of the World Wide Web Consortium (W3C) about the
Resource Description Framework (RDF) \cite{rdf} and the associated
query language SPARQL \cite{sparql}, there is a clear need of an
alternative to relational algebra which copes with this change in data
encodings, see e.g., \cite{AnglesABHRV17,PerezAG09,KKC}. 
In this paper, we consider RDF and SPARQL languages and propose a new
mathematical semantics of a kernel of these formalisms within
algebraic graph transformations setting.

\empw{RDF graphs} are the key data structure in RDF.
In \cite[Section~3]{rdf}, an RDF graph is defined as a set of \empw{RDF
triples}, where an RDF triple has the form
$(\empw{subject},\empw{predicate},\empw{object})$. The
\empw{subject} is either an \empw{IRI} (Internationalized Resource
Identifier) or a \empw{blank node}, the \empw{predicate} is an IRI and
the \empw{object} is either an IRI, a literal (denoting a value such as
a string, a number or a date) or a  \empw{blank node}. 
Blank nodes are arbitrary elements as long as they differ from IRIs
and literals and they do not have any internal structure: they are
used for indicating the existence of a thing and the \empw{blank node
  identifiers} are \empw{locally scoped}.
For instance, let us consider a toy database, $\TD$, consisting of the
following four triples
$\TD = \{(Alice, knows, Bob), (Tom, knows, Dave), (Bob, knows, blank_1),
 (blank_1,knows, Alice) \}$. The two first triples say that Alice knows
Bob and Tom knows Dave whereas the last two triples say that Bob knows
someone, represented by the blank node $blank_1$, who knows Alice. 
Notice that a predicate in an RDF triple cannot be a blank. For
example, a triple such as $(Paul, blank_2,Henry)$ standing for ``there is some relationship between Paul and Henry''
is not allowed in RDF, but only in generalized RDF
\cite[Section~7]{rdf}. Following the theoretical point of view we
propose in this paper, there is no harm to consider blank
predicates within RDF triples. We thus consider \emph{data graphs} in a more
general setting including RDF graphs.

%%%%%%%%%%%%%%%%%%%%%%%%%%%%%%%%%%%%%%%%%%%%%%%%

The query language SPARQL for RDF databases is based on \empw{basic
  graph patterns}, which are kinds of RDF graphs with variables
\cite[Section~2]{sparql}. In this paper, we consider \emph{query
  graphs} which generalize basic graph patterns by allowing blanks to
be predicates.  The SPARQL query processor searches for triples within
a given RDF database which \empw{match} the triple patterns
in the given basic graph pattern, and returns a multiset of solutions
or an RDF graph. 
Considering basic graph patterns, one may wonder what is the
difference between variables and blank nodes. SPARQL
specifications in \cite[Section~4.1.4]{sparql}
suggest similarities between them,  
% an uninitiated reader can be misled and believes that
%variables and blanks always play the same role. 
whereas the opposite is made in  \cite[Section~16.2]{sparql}.
%, they make a clear difference between them. 
%
In the formalization of %RDF and
SPARQL we propose, blank nodes
and variables are clearly distinguished by their respective roles in
the definition of morphisms. % between query graphs.

In the SPARQL recommendation \cite{sparql}, the SELECT query form is
described lengthily. This query form can be compared to the SELECT
query form of SQL, which returns a multiset of solutions.  In
contrast, the CONSTRUCT query form returns an RDF graph.  Let us
consider again the previous toy database $\TD$ and assume we formulate
a CONSTRUCT query that constructs triples of the form
$ R = (x, acquaintedWith, z)$ every time there exists a third party $y$
such that the following condition, which we call $L$, is satisfied~:
$(x, knows, y)$ and $(y, knows, z)$. Then, $L$ and $R$ are query graphs with
variables $x, y$ and $z$. They 
intuitively stand for the left-hand and right-hand sides of a rule
representing the considered CONSTRUCT query. To perform such query,
one should consider all matches, $m$, of the condition $L$ against the
database $\TD$ and for each match $m$, create a new triple
$(m(x), acquaintedWith, m(z))$. Starting from database, $\TD$, this
process yields the following result
$$H = \{(Alice, acquaintedWith, blank_1), (blank_1, acquaintedWith, Bob), (Bob,
acquaintedWith, Alice)\}$$
From graph transformation point of view,
$\TD$ is a host graph to be transformed by a rule whose left-hand and
right-hand sides are respectively $L$ and $R$. Notice that the
resulting graph $H$ does not contain the non-matched triple $(Tom,
knows, Dave)$. This means that the considered transformations are not
necessarily local. In addition, the graph $H$ gathers all possible
results triggered by the matches of $L$ against the host graph
$\TD$. This toy example is considered in Example~\ref{ex:avecrecollement}.

Following our formalization, the CONSTRUCT  query form, which is described very shortly in \cite[Section~16.2]{sparql},  is  more
fundamental than the SELECT query form. Actually, we start by  proposing an
operational semantics for CONSTRUCT queries based on a new approach of
algebraic graph transformations which we call \poim and we show afterward how
SELECT queries can be easily encoded as CONSTRUCT queries. 
This new \poim approach represents a CONSTRUCT query as a rule of the
form $L \rightarrow K \leftarrow R$ where $L,K$ and $R$ are basic
graph patterns, and a rewrite step is made of a pushout followed by an
image factorization. The result of a CONSTRUCT query is the outcome of
the transformation one obtains when running the above rule against
an RDF database. It happens that such rules and rewrite techniques can
be used also to encode the solutions computed by SELECT query
forms. As said earlier, the involved graph transformations are not
local in the sense that only query answers should be computed out of
the graph database. All parts which are not matched are deleted. Classical graph transformation techniques such as
Double Pushout \cite{CorradiniMREHL97} and Single Pushout
\cite{DBLP:conf/gg/EhrigHKLRWC97} or even more RDF oriented
transformations like MPOC-PO \cite{BraatzB08} are not best recommended
in this case (cf. Section~\ref{sec:discuss} for a comparison with
related work).

The paper is organized as follows. Section~\ref{sec:graph} defines the
objects and the morphisms of the categories of data graphs and query
graphs. Section~\ref{sec:poim} introduces the \poim algebraic
transformation. 
In Section~\ref{sec:construct}, we define two different
operational semantics for CONSTRUCT queries and prove their equivalence.
We first define a \emph{high-level calculus} as a mere application of the \poim
transformation. Then we propose a \emph{low-level calculus} which is defined by
means of several applications of the \poim transformation followed by
a ``merging'' process. Both calculi implement faithfully the SPARQL
semantics for CONSTRUCT queries (Theorem~\ref{theo:construct-sparql}).
In Section~\ref{sec:select}, we show how the \poim transformation can
be used to define a novel operational semantics of the SELECT queries.
This semantics, which is faithful to SPARQL definitions
(Theorem~\ref{theo:select-sparql}), is obtained by an original
translation of each SELECT query into a CONSTRUCT query.
Concluding remarks and related work are discussed in Section~\ref{sec:discuss}.
%

%-------------------------------------------------------------------------------
%-------------------------------------------------------------------------------
\section{Graphs of Triples} 
\label{sec:graph}

The set of \empw{IRIs}, denoted $\iri$, and the set of
\empw{literals}, 
denoted $\lit$, with its usual operations, 
are defined in \cite{rdf}.
Essentially, an IRI (Internationalized Resource Identifier) is an internet
address and a literal denotes a value such as a string, a number or a date.
The sets $\iri$ and $\lit$ are disjoint. 
In addition, let $\B$ be a countably infinite set,
disjoint from $\iri$ and $\lit$. 
The elements of $\B$ are called \emph{blanks}.
According to \cite[Section~3.1]{rdf}, 
\doc{an \empw{RDF graph} is a set of RDF triples, where 
 an \empw{RDF triple} consists of three components:
the \empw{subject}, which is an IRI or a blank node; 
the \empw{predicate}, which is an IRI; 
and the \empw{object}, which is an IRI, a literal or a blank node. 
The set of \empw{nodes} of an RDF graph is the set of subjects and objects
of triples in the graph.}
Using set-theoretic notations, this can be expressed as follows: 
let $\Tr= (\iri\cup\B) \times \iri \times (\iri\cup\lit\cup\B)$, 
then an RDF triple is an element of $\Tr$ and an RDF graph is a subset of $\Tr$.
Let us also consider the following extension of RDF \cite[Section~7]{rdf}:
\doc{
  A \empw{generalized RDF triple} is a triple having a subject, a predicate,
  and object, where each can be an IRI, a blank node or a literal.
  A \empw{generalized RDF graph} is a set of generalized RDF triples.
}
Let $\I=\iri\cup\lit$, so that 
a generalized RDF triple is an element of $(\I\cup\B)^3$
and a generalized RDF graph is a subset of $(\I\cup\B)^3$.

Let $\V$ be a countably infinite set disjoint from $\iri$, $\lit$ and $\B$. 
The elements of $\V$ are called \emph{variables}.
According to \cite[Section~2]{sparql}
\doc{a set of triple patterns is called a basic graph pattern, where 
triple patterns are like RDF triples except that each of the subject,
predicate and object may be a variable}.
Let $\Tr_V= (\iri\cup\B\cup\V) \times (\iri\cup\V) \times
(\iri\cup\lit\cup\B\cup\V)$, 
then a triple pattern is an element of $\Tr_V$
and a basic graph pattern is a subset of $\Tr_V$.
Since $\Tr_V$ is a subset of $(\I\cup\B\cup\V)^3$,
each basic graph pattern is a subset of $(\I\cup\B\cup\V)^3$.

RDF graphs and basic graph patterns are generalized 
in Definition~\ref{def:graph-data-query} as \emph{data graphs} and
\emph{query graphs} respectively,
are both relying on Definition~\ref{def:graph-graph}.  

\begin{definition}
\label{def:graph-graph}
For each set $A$, the \emph{triples on $A$} are the elements of $A^3$. 
For each triple $t=(s,p,o)$ on $A$ 
the elements $s$, $p$ and $o$ of $A$ are called respectively
the \emph{subject}, the \emph{predicate} and the \emph{object} of $t$.
A \emph{graph on $A$} is a set of triples on $A$, i.e. a subset of $A^3$. 
For each graph $T$ on $A$, the subset of $A$ made of the
subjects, predicates and objects of $T$ is called the set of \emph{attributes}
of $T$ and is denoted $|T|$; it follows that $T$ is a subset of $|T|^3$. 
Let $T$ and $T'$ be two graphs on $A$.
A \emph{morphism} $a:T\to T'$ is a map such that there is a map $M:|T|\to |T'|$
such that $a$ is the restriction of $M^3$ to $T$. 
Then $M$ is uniquely determined by $a$ and will be denoted by $|a|$.
This yields the \emph{category of graphs on $A$}, denoted
$\tgr{}{A}$. 
We say that a morphism $a:T\to T'$ of graphs on $A$ \emph{fixes} a subset $C$ of $A$ 
if $|a|(x)=x$ for each $x$ in $|T| \cap C$.
For each subset $C$ of $A$, the subcategory of $\tgr{}{A}$ made of
the graphs on $A$ with the morphisms fixing $C$ is denoted $\tgr{C}{A}$.
\end{definition}

Thus, by mapping $a$ to $|a|$ we get a one-to-one correspondence
between the morphisms $a:T\to T'$ of graphs on $A$ 
and the maps $M:|T|\to |T'|$ such that $M^3(T)\subseteq T'$.

An isomorphism (i.e., an invertible morphism) in $\tgr{}{A}$ 
is a morphism $a:T\to T'$ of graphs on $A$ such that $|a|:|T|\to |T'|$
is a bijection and $a(T)=T'$. 
A morphism $a$ fixing $C$ is determined by the restriction
of the map $|a|$ to $|T|\cap \ol{C}$, where $\ol{C} = A\setminus C$.
An isomorphism $a$ in $\tgr{C}{A}$
is a morphism $a:T\to T'$ of graphs on $A$ such that
$|a|$ is the identity on $|T| \cap C$
and a bijection between $|T|\cap \ol{C}$ and $|T'|\cap \ol{C}$ 
and $a(T)=T'$. 
The notions of \emph{inclusion}, \emph{subgraph}, \emph{image} and \emph{union}
for graphs on $A$ are 
defined as inclusion, subset, image and union for subsets of $A^3$.

\begin{definition}
\label{def:graph-data-query}
Let $\I$, $\B$ and $\V$ be three pairwise distinct countably infinite sets,
called respectively the sets of \emph{resource identifiers},
\emph{blanks} and \emph{variables}. 
Let $\IB=\I\cup\B$, $\IV=\I\cup\V$ and $\IBV=\I\cup\B\cup\V$.
The \emph{category of data graphs} is $\DGr{}=\tgr{}{\IB}$ and 
for each subset $C$ of $\IB$ the category of data graphs \emph{fixing} $C$
is the subcategory $\DGr{C}=\tgr{C}{\IB}$ of $\DGr{}$. 
The \emph{category of query graphs} is $\QGr{}=\tgr{}{\IBV}$ and 
for each subset $C$ of $\IBV$ the category of query graphs \emph{fixing} $C$
is the subcategory $\QGr{C}=\tgr{C}{\IBV}$ of $\QGr{}$. 
\end{definition}

Thus, since $\I=\iri\cup\lit$, 
the RDF graphs are the data graphs where 
only nodes can be blanks and only nodes that are not subjects can be literals,
and the \empw{RDF terms} of an RDF graph are its attributes
when it is seen as a data graph. 
Then the \empw{isomorphisms of RDF graphs}, as defined in
\cite[Section~3.6.]{rdf}, are the isomorphisms 
in the category $\DGr{\I}$ of data graphs fixing $\I$: 
indeed, two data graphs $G_1$ and $G_2$ are isomorphic in $\DGr{\I}$
if and only if they differ only by the names of their blanks. 
For each data graph $T$, let $|T|_\I=|T|\cap\I$ and $|T|_\B=|T|\cap\B$,
so that $|T|$ is the disjoint union of $|T|_\I$ and $|T|_\B$.
Similarly, the basic graph patterns of SPARQL are the query graphs where 
only nodes can be blanks and only nodes that are not subjects can be literals.
For each query graph $T$, let $|T|_\I=|T|\cap\I$, $|T|_\B=|T|\cap\B$ and
$|T|_\V=|T|\cap\V$,
so that $|T|$ is the disjoint union of $|T|_\I$, $|T|_\B$ and $|T|_\V$.

Morphisms of graphs can be used, for instance, 
for substituting the variables of a query graph 
(Definition~\ref{def:graph-match})
or for interpreting a data graph in a universe of discourse
(Definition~\ref{def:rdf-interpretation}).

\begin{definition}
\label{def:graph-match}
A \emph{match} from a query graph $L$ to a data graph $G$ 
is a morphism of query graphs from $L$ to $G$ which fixes $\I$. 
The set of matches from $L$ to $G$ is denoted $\Match(L,G)$
and the set of all matches from $L$ to any data graph is denoted $\Match(L)$.
\end{definition}

Thus, a match fixes each resource identifier   
and it maps a variable or a blank to a resource identifier or a blank.

The interpretations of an RDF graph are also kinds of morphisms, 
see Definition~\ref{def:rdf-interpretation}. 
Note that this will not be used later in this paper. 
We define an interpretation of
a data graph $G$ in a universe of discourse $U$ by generalizing
the definition of a morphism, according to \cite[Section~1.2.]{rdf}: 
\doc{
Any IRI or literal denotes something in the world 
(the ``\empw{universe of discourse}'').
These things are called \empw{resources}. 
The predicate itself is an IRI and denotes a \empw{property}, that is,
a resource that can be thought of as a binary relation.
}
Recall that the binary relations on a set $R$ are the subsets of $R^2$.
It may happen that a binary relation on $R$ is itself an element of $R$,
this is important for understanding Definition~\ref{def:rdf-interpretation}
and the semantics of RDF in general. 

\begin{definition}
\label{def:rdf-interpretation}
Given a set $R$ and a subset $P$ of $R^2$ made of binary relations on $R$, 
let $U$ be the set of triples $(s,p,o)$ in $R^3$ such that $p\in P$
and $(s,o)\in p$.
The \emph{universe of discourse} with $R$ as set of \emph{resources}
and $P$ as set of \emph{properties} is the graph $U$ on $R$.  
Given a universe of discourse $U$ on a set $R$ and a map $M_\I:\I \to R$, 
an \emph{interpretation} of a data graph $G$ is a map $i:G\to U$
such that $i=M^3$ for a map $M:|G|\to |U|$ which extends $M_\I$.
\end{definition}

In this paper, we consider categories $\DGr{C}$ and $\QGr{C}$ for various
subsets $C$ of $\IB$ and $\IBV$ respectively.
It will always be the case that $C$ contains $\I$,
so that we can say that resource identifiers have a ``global scope''. 
In contrast, blanks have a ``local scope'': 
in the basic part of RDF and SPARQL considered in this paper,
the scope of a blank node is restricted to one data graph or one query graph.
The note about \emph{blank node identifiers} in \cite[Section~3.4]{rdf} distinguishes
two kinds of syntaxes for RDF:
an abstract syntax where blank nodes do not have identifiers
and concrete syntaxes where blank nodes have identifiers.
In our approach a blank \emph{is} an attribute, which corresponds to a
concrete syntax, 
and the abstract syntax is obtained by considering data graphs
as objects of the category $\DGr{\I}$ up to isomorphism,
so that any blank node can be changed for a fresh blank node if needed.

\begin{notation}
\label{not:graph-categories}
In the examples variables are denoted as \verb+?x+, \verb+?y+, ...
and blanks as \verb+_:b+, \verb+_:c+, ...
The IRIs are written in an abbreviated way as
\verb+:alice+ (the address of Alice's web page),
\verb+:knows+ (the address where the ``knows'' relation is described), ...
Each triple $(s,p,o)$ is written as \verb+s p o+ and a dot is used for
separating triples inside an RDF graph or a basic graph pattern. 
\end{notation}

\begin{example}
\label{ex:graph-categories}
Consider three RDF graphs $G_1$, $G_2$ and $G_3$ as follows.
They are pairwise distinct, thus pairwise non-isomorphic in $\DGr{\IB}$.
Graphs $G_1$ and $G_2$ are isomorphic in $\DGr{\I}$.
In the RDF semantics the name of blanks does not matter, 
so that both $G_1$ and $G_2$ mean that ``Alice knows someone'' and
``someone knows Bob''.
Graph $G_3$ is not isomorphic to $G_1$ (thus nor to $G_2$) in $\DGr{\I}$,
it means more precisely that ``Alice knows someone who knows Bob''.

\medskip 
\noindent
\hfill
\begin{minipage}{1.4in}
\begin{Verbatim}[frame=single,label=$G_1$,fontsize=\verbsize]
 :alice :knows _:b . 
 _:c :knows :bob 
\end{Verbatim}
	\end{minipage} \hfill
	\begin{minipage}{1.4in}
\begin{Verbatim}[frame=single,label=$G_2$,fontsize=\verbsize]
 :alice :knows _:c . 
 _:b :knows :bob 
\end{Verbatim}
	\end{minipage} \hfill
	\begin{minipage}{1.4in}
\begin{Verbatim}[frame=single,label=$G_3$,fontsize=\verbsize]
 :alice :knows _:b. 
 _:b :knows :bob 
\end{Verbatim}
	\end{minipage} \hfill\null
        
\medskip 
\noindent
Now consider basic graph patterns $G_3$ to $G_8$
(each RDF graph can be seen as a basic graph pattern without variables,
like $G_3$ and $G_5$).
They are pairwise non-isomorphic in $\QGr{\IBV}$ because they are pairwise
distinct.
In $\QGr{\IV}$ only $G_7$ and $G_8$ are isomorphic. 
In $\QGr{\I}$ these query graphs belong to two different isomorphism classes:
on one side $G_3$ and $G_4$ are isomorphic
and on the other side $G_5$, $G_6$, $G_7$ and $G_8$ are isomorphic.

\medskip 
\noindent\hfill
	\begin{minipage}{1.4in}
\begin{Verbatim}[frame=single,label=$G_3$,fontsize=\verbsize]
 :alice :knows _:b. 
 _:b :knows :bob 
\end{Verbatim}
	\end{minipage} \hfill
	\begin{minipage}{1.4in}
\begin{Verbatim}[frame=single,label=$G_4$,fontsize=\verbsize]
 :alice :knows ?x. 
 ?x :knows :bob 
\end{Verbatim}
	\end{minipage}
\hfill 
	\begin{minipage}{1.4in}
\begin{Verbatim}[frame=single,label=$G_5$,fontsize=\verbsize]
 :alice :knows _:b. 
 _:c :knows :bob 
\end{Verbatim}
	\end{minipage} \hfill\null
        
\medskip 
\noindent\hfill
	\begin{minipage}{1.4in}
\begin{Verbatim}[frame=single,label=$G_6$,fontsize=\verbsize]
 :alice :knows ?x. 
 ?y :knows :bob 
\end{Verbatim}
	\end{minipage}
\hfill
	\begin{minipage}{1.4in}
\begin{Verbatim}[frame=single,label=$G_7$,fontsize=\verbsize]
 :alice :knows ?x. 
 _:b :knows :bob 
\end{Verbatim}
	\end{minipage} \hfill
	\begin{minipage}{1.4in}
\begin{Verbatim}[frame=single,label=$G_8$,fontsize=\verbsize]
 :alice :knows ?x. 
 _:c :knows :bob 
\end{Verbatim}
	\end{minipage}
        \hfill\null
        
\end{example}	

\begin{assumption}
\label{ass:graph}
   {F}rom now on $A$ is a countably infinite set, $C$ a subset of $A$,
    $\olC=A\setminus C$ the complement of $C$ in $A$,
    and we assume that both $C$ and $\olC$ are countably infinite. 
\end{assumption}

\begin{remark}
\label{rem:graph-iso}
Since $\olC$ is countably infinite, when dealing with a finite number of finite graphs
on $A$ it is always possible to find a \emph{fresh attribute outside $C$},
i.e., an element of $\olC$ that is not an attribute of any of the given graphs.
We will use repeatedly the following consequence of this fact:
\smallskip 
\\ \textsl{Given a graph $T$ on $A$, if any attribute of $T$ in $\olC$
is replaced by any fresh attribute outside $C$ the result is 
a graph $T'$ on $A$ that is isomorphic to $T$ in $\tgr{C}{A}$.
Such a $T'$ exists when $T$ is finite.}
\end{remark}

Now let us focus on some kinds of colimits of graphs on $A$:
coproducts in Proposition~\ref{prop:graph-coprod} and 
pushouts in Proposition~\ref{prop:graph-po}.
Recall that colimits in any category are defined up to isomorphism
in this category.
Remember that a morphism $a:T\to T'$ in $\tgr{}{A}$
and a map $M:|T|\to|T'|$ are such that $M(x)=|a|(x)$ for each attribute
$x\in |T|$ if and only if $M^3(t)=a(t)$ for each triple $t\in T$.
  
\begin{proposition}  
\label{prop:graph-coprod}
Given graphs $T_1,...,T_k$ on $A$
such that $|T_i|\cap|T_j|\subseteq C$ for each $i\ne j$,
the union $T_1 \cup ...\cup T_k$ is a coproduct of $T_1,...,T_k$
in $\tgr{C}{A}$.
Given any finite graphs $T_1,...,T_k$ on $A$
there are graphs $T'_1,...,T'_k$ on $A$ such that $T'_i$ is isomorphic to $T_i$
in $\tgr{C}{A}$ for each $i$ 
and $|T'_i|\cap|T'_j|\subseteq C$ for each $i\ne j$,
and then the union $T'_1 \cup ...\cup T'_k$ is a coproduct of $T_1,...,T_k$
in $\tgr{C}{A}$.
\end{proposition}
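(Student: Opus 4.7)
The plan is to verify the universal property of the coproduct for the first statement and then reduce the second statement to the first by a fresh-renaming argument using Remark~\ref{rem:graph-iso}.

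For the first part, I would begin by noting that when $|T_i|\cap |T_j|\subseteq C$ for all $i\ne j$, the set-theoretic inclusions $|T_i|\hookrightarrow |T_1|\cup\cdots\cup|T_k|=|T_1\cup\cdots\cup T_k|$ define morphisms $\iota_i:T_i\to T_1\cup\cdots\cup T_k$ in $\tgr{C}{A}$ (they fix $C$ trivially, being identity inclusions). To check the universal property, let $T'$ be any graph on $A$ and let $a_i:T_i\to T'$ be morphisms in $\tgr{C}{A}$. I would define $|a|:|T_1\cup\cdots\cup T_k|\to|T'|$ by $|a|(x)=|a_i|(x)$ whenever $x\in |T_i|$. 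The crucial well-definedness check is where the disjointness hypothesis is used: if $x\in |T_i|\cap|T_j|$ for some $i\ne j$, then $x\in C$, and since both $a_i$ and $a_j$ fix $C$, we have $|a_i|(x)=x=|a_j|(x)$.

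Once $|a|$ is well-defined, the remaining verifications are routine. For any triple $t$ in the union, $t\in T_i$ for some $i$, so $|a|^3(t)=|a_i|^3(t)=a_i(t)\in T'$; this shows $a$ (the restriction of $|a|^3$) is a morphism of graphs on $A$. It fixes $C$ because any $x\in|T_1\cup\cdots\cup T_k|\cap C$ lies in some $|T_i|\cap C$, on which $|a|$ equals $|a_i|$, which fixes $C$. Finally, $a\circ\iota_i=a_i$ by construction, and uniqueness follows since any morphism $b$ with $b\circ\iota_i=a_i$ must satisfy $|b|(x)=|a_i|(x)$ on each $|T_i|$, hence $|b|=|a|$ on their union.

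For the second part, I would use Remark~\ref{rem:graph-iso}: since $\olC$ is countably infinite and the graphs are finite, I can choose pairwise disjoint subsets $E_1,\ldots,E_k$ of $\olC$, disjoint from all existing attributes, with $|E_i|\ge |\,|T_i|\cap\olC\,|$. By replacing each attribute of $T_i$ lying in $\olC$ by a fresh element of $E_i$, I obtain $T'_i$ isomorphic to $T_i$ in $\tgr{C}{A}$ with $|T'_i|\cap\olC\subseteq E_i$. Then for $i\ne j$, $|T'_i|\cap|T'_j|$ is disjoint from $\olC$ (because $E_i\cap E_j=\emptyset$), hence contained in $C$, and the first part applies.

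The only delicate step is the well-definedness argument; the rest is bookkeeping. Since coproducts are defined up to isomorphism, once the first part establishes that $T'_1\cup\cdots\cup T'_k$ is a coproduct of $T'_1,\ldots,T'_k$, and each $T'_i\cong T_i$, it is also a coproduct of $T_1,\ldots,T_k$ in $\tgr{C}{A}$.
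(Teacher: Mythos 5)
Your proof is correct and follows essentially the same route as the paper: both construct the mediating map attribute-by-attribute and use the hypothesis $|T_i|\cap|T_j|\subseteq C$ to see that the $|a_i|$ agree (as the identity) on overlaps, then derive the second part from Remark~\ref{rem:graph-iso}. Your version merely phrases the key step as a well-definedness check on overlaps where the paper instead partitions $|T_1|\cup\cdots\cup|T_k|$ into the pieces $|T_i|\setminus C$ and the part inside $C$; these are the same argument.
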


\begin{proof}
First, assume that $|T_i|\cap|T_j|\subseteq C$ for each $i\ne j$.   
Consider morphisms $a_i:T_i\to T$ in $\tgr{C}{A}$ for $i=1,...,k$
and the maps $|a_i|:|T_i|\to |T|$. Note that $|T_1 \cup ...\cup T_k| =
|T_1| \cup ...\cup |T_k|$ and that $|T_1| \cup ...\cup |T_k|$
is the disjoint union of the sets $|T_i|\!\setminus\! C$ for $i=1,...,k$ and
$(|T_1| \cup ...\cup |T_k|)\cap C$, because of the assumption
$|T_i|\cap|T_j|\subseteq C$ for each $i\ne j$.
Thus we can define a map $M:|T_1 \cup ...\cup T_k|\to |T|$ by: 
$M(x)=|a_i|(x)$ for each $i$ and each $x\in |T_i|\!\setminus\! C$
and $M(x)=x$ for each $x\in (|T_1| \cup ...\cup |T_k|)\cap C$.
Then $M$ coincides with $|a_i|$ on $|T_i|$ for each $i$.
Thus for each $t\in T_i$ we have $M^3(t)=a_i(t)$,
which proves that the image of $T_1 \cup ...\cup T_k$ by $M^3$ is in $T$
and that the restriction of $M^3$ defines a morphism
$a:T_1 \cup ...\cup T_k\to T$ in $\tgr{C}{A}$ which
coincides with $a_i$ on $T_i$ for each~$i$.
Unicity is clear.
Now, the last statement, about any finite graphs $T_1,...,T_k$ on $A$,
is a consequence of Remark~\ref{rem:graph-iso}.
\end{proof} 

\begin{proposition}  
\label{prop:graph-po}
Let $l:L\to K$ and $m:L\to G$ be morphisms of graphs on $A$
such that $L$ and $K$ are finite, $l$ is an inclusion and $m$ fixes $C$. 
Let us assume that $|K| \cap |G| \subseteq C$
(this is always possible up to isomorphism in $\tgr{C}{A}$,
by Remark~\ref{rem:graph-iso}).
Let $N:|K|\to |G|\cup|K\setminus L|$ be such that $N(x)=|m|(x)$ for $x\in |L|$ 
and $N(x)=x$ otherwise. 
Let $D=G\cup N^3(K)$, let $n:K\to D$ be the restriction of $N^3$
and $g:G\to D$ the inclusion.
Then $|D|=|G|\cup|K\setminus L|$ and
the square $(l,m,n,g)$ is a pushout square in $\tgr{C}{A}$.
\end{proposition}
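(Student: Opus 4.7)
The plan is to construct the mediating datum explicitly and then verify the pushout universal property by a piecewise definition on attributes. I would first justify the description of $|D|$: since $N(|L|)=|m|(|L|)\subseteq|G|$ and $N$ is the identity on $|K|\setminus|L|$, one has $N(|K|)\subseteq|G|\cup(|K|\setminus|L|)$, hence $|D|=|G|\cup N(|K|)=|G|\cup(|K|\setminus|L|)$, which matches the stated formula. The map $g$ is an inclusion and so automatically a morphism in $\tgr{C}{A}$. For $n$, the inclusion $N^3(K)\subseteq D$ holds by construction, and $n$ fixes $C$: given $x\in|K|\cap C$, if $x\in|L|$ then $N(x)=|m|(x)=x$ since $m$ fixes $C$, while if $x\in|K|\setminus|L|$ then $N(x)=x$ by definition. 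Commutativity $n\circ l=g\circ m$ reduces, because $l$ and $g$ are inclusions, to the identity $N^3(t)=m(t)$ for every $t\in L$, which follows from $N(x)=|m|(x)$ on $|L|$.

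For the universal property, let $n':K\to D'$ and $g':G\to D'$ be morphisms of $\tgr{C}{A}$ with $n'\circ l=g'\circ m$. I would define $M:|D|\to|D'|$ piecewise by $M(x)=|g'|(x)$ for $x\in|G|$ and $M(x)=|n'|(x)$ for $x\in|K|\setminus|L|$. The two clauses must agree on the overlap, and this is precisely where the hypothesis $|K|\cap|G|\subseteq C$ is used: any $x\in|G|\cap(|K|\setminus|L|)$ lies in $|G|\cap|K|\subseteq C$, and since $g'$ and $n'$ both fix $C$, both definitions return $x$. Thus $M$ is well-defined and also fixes $C$. I would then verify $M^3(D)\subseteq D'$: on a triple $t\in G$ this is immediate from $g'(t)\in D'$; on a triple $N^3(t)$ with $t\in K$ one has $M^3(N^3(t))=(M\circ N)^3(t)$, and $M\circ N$ coincides with $|n'|$ on $|K|$ (on $|K|\setminus|L|$ directly, and on $|L|$ because $M(|m|(x))=|g'|(|m|(x))=|n'\circ l|(x)=|n'|(x)$ using the assumed commutativity and the inclusion $l$). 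Hence the restriction of $M^3$ yields a morphism $d:D\to D'$ in $\tgr{C}{A}$ with $d\circ g=g'$ and $d\circ n=n'$.

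Uniqueness is immediate: any candidate $d$ must satisfy $|d|\circ|g|=|g'|$ and $|d|\circ|n|=|n'|$, which forces $|d|$ to be $|g'|$ on $|G|$ and $|n'|$ on $|K|\setminus|L|$, because $|n|=N$ is the identity on $|K|\setminus|L|$. These two subsets cover $|D|$, so $|d|$ and hence $d$ is uniquely determined. I do not anticipate a serious obstacle; the one delicate point is invoking the hypothesis $|K|\cap|G|\subseteq C$ together with the fact that the competing morphisms fix $C$ at precisely the right moment to glue the two branches of $M$ consistently, and using the commutativity $n'\circ l=g'\circ m$ to check that $M$ behaves correctly on the triples produced by $N^3$.
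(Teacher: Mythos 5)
Your proof is correct and takes essentially the same approach as the paper's: both construct the mediating morphism piecewise on attributes ($|g'|$ on $|G|$, $|n'|$ on $|K|\setminus|L|$), use the hypothesis $|K|\cap|G|\subseteq C$ together with the fact that competing morphisms fix $C$ to glue the two branches, and then pass to triples via $D=G\cup N^3(K)$ and the commutativity $n'\circ l=g'\circ m$. If anything, you are slightly more explicit than the paper in checking that $n$ itself fixes $C$ and in reading the overlap set as $|K|\setminus|L|$, but the argument is the same.
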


This means that $D$ is a kind of ``union of $G$ and $K$ over $L$'', however 
it is \emph{not} the case that $D$ is the union of $G$ and $K\setminus L$,
in general.
It is always the case that $D=G\cup D_2$ where $D_2=N^3(K\setminus L)$
but in general $N^3$ is not the identity on $K\setminus L$ 
and moreover $G$ and $D_2$ are not disjoint.
% +++ example?

\begin{proof}
{F}rom $D=G\cup N^3(K)$ we get $|D|=|G|\cup |N^3(K)|$, and since
$|N^3(K)|=N(|K|)=N(|L|\cup|K\!\setminus\! L|)=N(|L|)\cup N(|K\!\setminus\! L|)
=|m|(|L|)\cup |K\!\setminus\! L|$ with $|m|(|L|)\subseteq |G|$
we get $|D|=|G|\cup |K\!\setminus\! L|$. 
The definition of $n$ implies that $g\circ m = n\circ l$. 
Now let $a:G\to T$ and $b:K\to T$ be any morphisms in $\tgr{C}{A}$ such that
$a\circ m = b\circ l$. 
First, let us focus on attributes.
We have $|g|\circ |m| = |n|\circ |l|$ and $|a|\circ |m| = |b|\circ |l|$. 
Since $|G| \cap |K\!\setminus\! L| \subseteq C$ we have $|a|(x)=|b|(x)=x$
for each $x\in |G| \cap |K\!\setminus\! L|$. 
Since $|D|=|G|\cup |K\!\setminus\! L|$ there is a unique map 
$F:|D|\to |T|$ such that $F(x)=|a|(x)$ for $x\in |G|$ and
$F(x)=|b|(x)$ for $x\in |K\!\setminus\! L|$.
Thus on the one hand $F(|g|(x)) = F(x) =|a|(x)$ for each $x\in |G|$, so that
$F\circ |g|=|a|$.
And on the other hand for each $x\in |K|$, if $x\in |L|$ then 
$F(|n|(x)) = F(|m|(x)) = |a|(|m|(x)) = |b|(|l|(x))=|b|(x)$,
otherwise $F(|n|(x)) = F(x) = |b|(x)$, so that $F\circ |n|=|b|$. 
Second, let us consider triples. 
Since $D=G\cup N^3(K)$ and $F^3(G) =a(G) $ and $F^3(N^3(K)) = F^3(n(K))= b(K)$
we get $F^3(D) \subseteq T$, which means that there is a morphism
$f:D\to T$ of graphs on $A$ such that $|f|=F$, $f\circ g=a$ and $f\circ n=b$.
Unicity is clear. 
\end{proof}

%-------------------------------------------------------------------------------
%-------------------------------------------------------------------------------
\section{The \poim Transformation} 
\label{sec:poim}

A SPARQL query like 
``CONSTRUCT \{$R$\} WHERE \{$L$\}'' is called \emph{basic} 
when both $R$ and $L$ are basic graph patterns.
In such a query, variables with the same name in $L$ and $R$ denote the same
RDF term, whereas it is not the case for blank nodes.
The statement ``\doc{blank nodes in graph patterns act as variables}''
in \cite[Section~4.1.4]{sparql} holds for $L$, whereas blank nodes in $R$
give rise to fresh blank nodes in the result of the query as in
Examples~\ref{ex:poim-one-B-run} and~\ref{ex:construct-two-B-run}.
Thus, the meaning of blank nodes in $L$ is unrelated to the
meaning of blank nodes in $R$, 
and in both $L$ and $R$ each blank can be replaced by a fresh blank.

We generalize this situation in Definition~\ref{def:construct-rule}
by allowing any data graphs for $L$ and $R$ up to isomorphism in $\QGr{\IV}$:
the resource identifiers and the variables in $L$ and $R$ are fixed 
but each blank can be replaced by a fresh blank. 
Thus, without loss of generality, we can assume that 
$|L|_\B\cap|R|_\B=\emptyset$.
Under this assumption, the set of triples $K=L\cup R$  
with the inclusions of $L$ and $R$ in $K$ 
is a coproduct of $L$ and $R$ in the category $\QGr{\IV}$. 
We also assume that each variable in $R$ occurs in $L$,
so that every substitution for the variables in $L$ provides a 
substitution for the variables in $R$.
The relevance of this assumption with respect to SPARQL queries
is discussed in Section~\ref{ssec:construct-sparql}. %+++ ok ? 
Note that this assumption $|R|_\V\subseteq|L|_\V$ is equivalent to $|K|_\V=|L|_\V$.

\begin{definition}
\label{def:construct-rule}
A \emph{basic construct query} is a pair of finite query graphs $(L,R)$  
such that $|L|_\B\cap|R|_\B=\emptyset$ and $|R|_\V\subseteq|L|_\V$, 
up to isomorphism in the category $\QGr{\IV}$. 
The \emph{transformation rule} of a basic construct query $(L,R)$ is the
cospan $\rhoP_{L,R}=(L\upto{l} K\upfrom{r} R)$ 
where $K=L\cup R$ and $l$ and $r$ are the inclusions.
Its \emph{left-hand side} is $L$ and its \emph{right-hand side} is $R$.

$$ 
\rhoP_{L,R} \;\;=\;\; 
  \xymatrix@C=4pc{
    L \ar[r]^(.35){l}_(.35){\subseteq} & K=L\cup R & \ar[l]_(.35){r}^(.35){\supseteq} R}
$$ 
\end{definition}

\begin{example}
\label{ex:poim-one-V}

Consider the following SPARQL CONSTRUCT query, based on the examples given
in the ``CONSTRUCT'' section (Section~16.2) of \cite{sparql}.
This query builds a triple \verb+?x :FN ?name+ for each triple
\verb+?x :name ?name+\;:

\begin{Verbatim}[frame=single,label=Query,fontsize=\verbsize]
 CONSTRUCT { ?x :FN ?name } WHERE { ?x :name ?name }
\end{Verbatim}

\noindent
In the corresponding transformation rule $L\upto{l} K\upfrom{r} R$
there are no blanks in $L$ nor in $R$,
thus $K=L\cup R$ and $l$ and $r$ are the inclusions. 

\medskip 
\noindent
\begin{minipage}{1.3in}
\begin{Verbatim}[frame=single,label=$L$,fontsize=\verbsize] 
 ?x :name ?name  
\end{Verbatim} 
\end{minipage}
\hfill 
  $\upto{l}$
\hfill 
\begin{minipage}{1.3in}	
\begin{Verbatim}[frame=single,label=$K$,fontsize=\verbsize] 
 ?x :name ?name .
 ?x :FN ?name 
\end{Verbatim}
\end{minipage}
\hfill 
  $\upfrom{r}$
\hfill 
\begin{minipage}{1.3in}	
\begin{Verbatim}[frame=single,label=$R$,fontsize=\verbsize] 
 ?x :FN ?name 
\end{Verbatim} 
\end{minipage}

\end{example} 

\begin{example}
\label{ex:poim-one-B}

Now the SPARQL CONSTRUCT query from Example~\ref{ex:poim-one-V}
is modified by replacing both occurrences of the variable \mbox{{\tt ?x}}
by the blank node \mbox{{\tt \blank{x}}}: 

\begin{Verbatim}[frame=single,label=Query,fontsize=\verbsize]
 CONSTRUCT { _:x :FN ?name } WHERE { _:x :name ?name }
\end{Verbatim}

\noindent 
In the corresponding transformation rule
one blank has been modified so as to ensure that $|L|_\B\cap|R|_\B$ is empty,
so that it is still the case that $K=L\cup R$.

\medskip 
\noindent
\begin{minipage}{1.3in}
\begin{Verbatim}[frame=single,label=$L$,fontsize=\verbsize] 
 _:x :name ?name 
\end{Verbatim} 
\end{minipage}
\hfill 
  $\upto{l}$
\hfill 
\begin{minipage}{1.3in}	
\begin{Verbatim}[frame=single,label=$K$,fontsize=\verbsize] 
 _:x :name ?name .
 _:y :FN ?name 
\end{Verbatim}
\end{minipage}
\hfill 
  $\upfrom{r}$
\hfill 
\begin{minipage}{1.3in}	
\begin{Verbatim}[frame=single,label=$R$,fontsize=\verbsize] 
 _:y :FN ?name 
\end{Verbatim} 
\end{minipage}

\end{example} 

When a basic SPARQL query ``CONSTRUCT \{$R$\} WHERE \{$L$\}'' is run
against an RDF graph $G$, and when there is precisely one match of $L$
into $G$, the result of the query is an RDF graph $H$ obtained by
substituting the variables in $R$. This
substitution can be seen as a match of $R$ into $H$.
We claim that the process of building $H$ with this match of $R$ into $H$
from the match of $L$ into $G$ can be seen as a two-step process involving an
intermediate match of $K$ in an RDF graph $D$.
This claim will be proved, more generally, in Section~\ref{sec:construct}. 
The definition of this two-step process relies on an algebraic construction
that we call the \emph{\poim transformation}: 
PO for \emph{pushout} and IM for \emph{image}
(Definition~\ref{def:construct-poim}).
The \poim transformation is related 
to a large family of algebraic graph transformations based on pushouts,
like the SPO (Simple Pushout) \cite{DBLP:conf/gg/EhrigHKLRWC97}, 
DPO (Double Pushout) \cite{CorradiniMREHL97}
or SqPO (Sesqui-Pushout) \cite{CorradiniHHK06}.
In the \poim transformation, the PO step creates fresh blank nodes
 and instantiates the variables of $K$, 
while the IM step deletes everything that is not in the image of $R$,
as explained now. 

Given a basic construct query $(L,R)$ and its transformation rule
$L\upto{l} K\upfrom{r} R$, 
the \poim transformation is defined as a map from the matches of $L$
to the matches of $R$, in two steps:
first from the matches of $L$ to the matches of $K$,
then from the matches of $K$ to the matches of $R$.  
Given an inclusion $l:L\to K$ in $\QGr{\I}$, 
the \emph{cobase change along $l$} is the 
map $l_*:\Match(L) \to \Match(K)$ that maps
each $m:L\to G$ to $l_*(m):K\to D$ defined from the pushout of $l$ and $m$
in $\QGr{\I}$, as described in Proposition~\ref{prop:graph-po}. 
Note that $D$ is a data graph because of the assumption $|K|_\V=|L|_\V$.
Given an inclusion $r:R\to K$ in $\QGr{\I}$, 
the \emph{image factorization along $r$} is the 
map $r^+:\Match(K) \to \Match(R)$ that maps
each $n:K\to D$ to $r^+(n):R\to H$ where $H$ is the image 
of $R$ in $D$ and $r^+(n)$ is the restriction of $n$ 
and $h:H\to D$ is the inclusion.
This leads to Definition~\ref{def:construct-poim}
and Proposition~\ref{prop:construct-poim}.

\begin{definition}
\label{def:construct-poim}
Let $(L,R)$ be a basic construct query and $L\upto{l} K\upfrom{r} R$
its transformation rule.
The \emph{\poim transformation map} of $(L,R)$ is the map 
$$ \fpoim_{L,R} = r^+\circ l_*:\Match(L) \to \Match(R) $$
composed of the cobase change map $l_*$ and the image factorization map $r^+$.
The \emph{result} of applying $\fpoim_{L,R}$ to a match $m:L\to G$
is the match $\fpoim_{L,R}(m):R\to H$ or simply the query graph $H$. 
\end{definition}
\begin{equation}
  \label{diag:poim}
\xymatrix@R=2.5pc@C=6pc{
  \ar@{}[rd]|{(PO)} L \ar[r]^{l} \ar[d]_{m} &
  K \ar@{.>}[d]_{l_*(m)}|{=}^{n} \ar@{}[rd]|{(IM)} & 
  R \ar[l]_{r} \ar@{.>}[d]_{r^+(n)}|{=}^{p=\fpoim_{L,R}(m)} \\
  G \ar@{.>}[r]^{g} & D & H \ar@{.>}[l]_{h} \\ 
}
\end{equation}

Proposition~\ref{prop:construct-poim} says that $H$ is obtained from $R$
by instantiating all variables as in $L$ and by renaming blanks
in an arbitrary way, as long as this renaming is one-to-one.
Note that only the image of the match is transformed and the remaining parts 
of the RDF graph are deleted.

\begin{proposition}
\label{prop:construct-poim}
Let $(L,R)$ be a basic construct query and $m:L\to G$ a match. 
Let $P:|R|\to \IB$ be defined by $P(x)=|m|(x)$ for $x\in|R|_\V$ and $P(x)=x$
otherwise.
Then, up to isomorphism in $\QGr{\I}$, 
the result of applying $\fpoim_{L,R}$ to $m$ is $p:R\to H$
where $H=P^3(R)$ and $p$ is the restriction of $P^3$. 
\end{proposition}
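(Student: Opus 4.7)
The plan is to compute both steps of the POIM transformation explicitly, applying the concrete pushout formula of Proposition~\ref{prop:graph-po} and the definition of image factorization, and then to observe that the resulting map on $R$ coincides on attributes with $P$.

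First I would use Remark~\ref{rem:graph-iso} to replace $K$, up to an isomorphism in $\QGr{\I}$ that fixes $L$, by an isomorphic copy whose blanks outside $|L|$ (i.e.\ in $|R|\setminus|L|$, since $K=L\cup R$) are fresh with respect to $G$. Combined with the fact that $m$ fixes $\I$ and that $|G|\subseteq\IB$ contains no variables, this gives $|K|\cap|G|\subseteq\I$, so Proposition~\ref{prop:graph-po} applies to $l\colon L\to K$ and $m\colon L\to G$. It produces a map $N\colon|K|\to|G|\cup|K\setminus L|$ with $N(x)=|m|(x)$ for $x\in|L|$ and $N(x)=x$ otherwise, the data graph $D=G\cup N^{3}(K)$, and the morphism $n=l_*(m)\colon K\to D$ which is the restriction of $N^{3}$. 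Note that $D$ is indeed a data graph because $|K|_\V=|L|_\V$ and $N$ sends $|L|$ into $|G|\subseteq\IB$.

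For the image factorization step, since $r\colon R\to K$ is the inclusion, the composite $n\circ r\colon R\to D$ is simply the restriction of $N^{3}$ to $R$. By definition, $H=(n\circ r)(R)=N^{3}(R)$ and $p=r^{+}(n)=\fpoim_{L,R}(m)$ is the corestriction of $N^{3}$ to $H$. Hence it suffices to verify that the maps $N$ and $P$ agree on $|R|$, for then $N^{3}(R)=P^{3}(R)$ and the two corestrictions coincide.

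This final verification is a case split on $x\in|R|$. If $x\in|R|_\I$ then $P(x)=x$ by definition; either $x\notin|L|$ and $N(x)=x$, or $x\in|L|$ and $N(x)=|m|(x)=x$ since $m$ fixes $\I$. If $x\in|R|_\V$ then $|R|_\V\subseteq|L|_\V$ places $x$ in $|L|$, so $N(x)=|m|(x)=P(x)$. If $x\in|R|_\B$ then $|L|_\B\cap|R|_\B=\emptyset$ together with $x\notin\I\cup\V$ forces $x\notin|L|$, so $N(x)=x=P(x)$. Thus $H=P^{3}(R)$ and $p$ is the restriction of $P^{3}$, as required. The only real obstacle is the bookkeeping of the ``up to isomorphism in $\QGr{\I}$'' quantifier when applying Proposition~\ref{prop:graph-po}; once $K$ is adjusted so that the concrete pushout formula applies, everything else is a direct unfolding of definitions.
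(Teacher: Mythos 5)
Your proof follows the paper's own argument essentially step for step: normalize blanks so that the explicit pushout formula of Proposition~\ref{prop:graph-po} applies, read off the map $N$, restrict along $r$ to obtain the image factorization, and check attribute by attribute that $N$ agrees with $P$ on $|R|$; your three-way case split on $|R|_\I$, $|R|_\V$, $|R|_\B$ is exactly the paper's decomposition of $|L|\cap|R|$. The one place where the bookkeeping slips is the freshness step: you rename only the blanks of $|R|\setminus|L|$ (so as to leave $L$ and $m$ untouched) and then assert $|K|\cap|G|\subseteq\I$. That inclusion does not follow, because a blank of $L$ may perfectly well also be an attribute of $G$ --- take $L=G$ a single variable-free triple containing a blank and $m$ the identity --- and such a blank lies in $|K|\cap|G|$ but not in $\I$, so the stated hypothesis of Proposition~\ref{prop:graph-po} is not met. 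The paper's proof avoids this by renaming \emph{all} blanks of $L$ and of $R$ away from those of $G$ (replacing $m$ by its composite with the corresponding isomorphism, which is harmless since the conclusion is only claimed up to isomorphism in $\QGr{\I}$). Alternatively, your weaker renaming does secure $|G|\cap|K\setminus L|\subseteq\I$ (using $|L|_\B\cap|R|_\B=\emptyset$), which is all that the proof of Proposition~\ref{prop:graph-po} actually uses; but then you should invoke that weaker condition rather than claim the stronger inclusion. With either repair the remainder of your argument is correct and coincides with the paper's.
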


\begin{proof}
We use the notations of Diagram~(\ref{diag:poim}).
Up to isomorphism in $\QGr{\I}$ we can assume that all blanks in $L$ or in $R$
are distinct from the blanks in $G$. Then $|G| \cap |K| \subseteq I$,
so that by Proposition~\ref{prop:graph-po} 
the data graph $D$ is $D=G\cup n(K)$ where $n$ is such that
$|n|(x)=|m|(x)$ for $x\in |L|$ and $|n|(x)=x$ otherwise.
It follows that the restriction of $n$ to $R$ is such that
$|n|(x)=|m|(x)$ for $x\in |L|\cap|R|$ and $|n|(x)=x$ otherwise.
Note that $|L|\cap |R|$ is the disjoint union of
$|L|_\I\cap|R|_\I$, that is fixed by all morphisms in $\QGr{\I}$, 
and $|L|_\V\cap|R|_\V$, with $|L|_\V\cap|R|_\V=|R|_\V$
since $|R|_\V\subseteq|L|_\V$.
Thus the restriction of $n$ to $R$ is such that
$|n|(x)=|m|(x)$ for $x\in |R|_\V$ and $|n|(x)=x$ otherwise.
The result follows. 
\end{proof}

\begin{remark}
  \label{rem:poim-category}
  Each set $\Match(X)$ can be seen as a coslice category, then the maps
  $r^+$ and $l_*$ can be seen as functors: this could be useful
  when extending this paper to additional features of SPARQL. 
\end{remark}

\begin{example}
  \label{ex:poim-one-V-run}
  
Consider the SPARQL CONSTRUCT query from Example~\ref{ex:poim-one-V}: 

\begin{Verbatim}[frame=single,label=Query,fontsize=\verbsize]
 CONSTRUCT { ?x :FN ?name } WHERE { ?x :name ?name }
\end{Verbatim}

\noindent
and let us run this query against the RDF graph $G$:

\begin{Verbatim}[frame=single,label=$G$,fontsize=\verbsize] 
 :alice :name "Alice" . :alice :nick "Lissie" 
\end{Verbatim}

\noindent
There is a single match $m$, it is such that 
$ m({\tt ?x}) = \mbox{{\tt :alice}} $ and
$ m(\mbox{{\tt ?name}}) = \mbox{{\tt "Alice"}} $.
The \poim transformation produces successively the following
data graphs $D$ and $H$, where $H$ is the query result:

\medskip 
\noindent
\begin{minipage}{1.7in}
\begin{Verbatim}[frame=single,label=$G$,fontsize=\verbsize] 
  :alice :name "Alice" . 
  :alice :nick "Lissie" 
\end{Verbatim} 
\end{minipage}
\hfill 
  $\upto{g}$
\hfill 
\begin{minipage}{1.7in}	
\begin{Verbatim}[frame=single,label=$D$,fontsize=\verbsize] 
 :alice :name "Alice" .
 :alice :nick "Lissie" .
 :alice :FN "Alice" 
\end{Verbatim}
\end{minipage}
\hfill 
  $\upfrom{h}$
\hfill 
\begin{minipage}{1.7in}	
\begin{Verbatim}[frame=single,label=$H$,fontsize=\verbsize] 
 :alice :FN "Alice"  
\end{Verbatim} 
\end{minipage}

\end{example} 

\begin{example}
\label{ex:poim-one-B-run}
  
Now consider the SPARQL CONSTRUCT query from Example~\ref{ex:poim-one-B}:  

\begin{Verbatim}[frame=single,label=Query,fontsize=\verbsize]
 CONSTRUCT { _:x :FN ?name } WHERE { _:x :name ?name }
\end{Verbatim}

\noindent
Let us run this query against the RDF graph $G$ from
Example~\ref{ex:poim-one-V-run}. 
There is a single match $m$, it is such that 
$ m({\tt \_:x}) = \mbox{{\tt :alice}} $ and
$ m(\mbox{{\tt ?name}}) = \mbox{{\tt "Alice"}} $.
The \poim transformation produces successively the following
data graphs $D$ and $H$, where $H$ is the query result:

\medskip 
\noindent
\begin{minipage}{1.7in}
\begin{Verbatim}[frame=single,label=$G$,fontsize=\verbsize] 
  :alice :name "Alice" .
  :alice :nick "Lissie" 
\end{Verbatim} 
\end{minipage}
\hfill 
  $\upto{g}$
\hfill 
\begin{minipage}{1.7in}	
\begin{Verbatim}[frame=single,label=$D$,fontsize=\verbsize] 
 :alice :name "Alice" .
 :alice :nick "Lissie" . 
 _:b :FN "Alice" 
\end{Verbatim}
\end{minipage}
\hfill 
  $\upfrom{h}$
\hfill 
\begin{minipage}{1.7in}	
\begin{Verbatim}[frame=single,label=$H$,fontsize=\verbsize] 
 _:b :FN "Alice"  
\end{Verbatim} 
\end{minipage}

\end{example}

% \input{4-construct}
%-------------------------------------------------------------------------------
%-------------------------------------------------------------------------------
\section{Running Basic Construct Queries} 
\label{sec:construct}

This Section begins with our definition of the \emph{query result} of
applying a basic construct query (Definition~\ref{def:construct-rule})
to a data graph (Definition~\ref{def:graph-data-query}).
Theorem~\ref{theo:construct-sparql} in Section~\ref{ssec:construct-sparql} 
proves that our query result coincides (up to renaming blanks) 
with the answer returned by SPARQL when
$L$ and $R$ are basic graph patterns and $G$ is an RDF graph.
In Section~\ref{sec:poim}, we defined the \poim transformation 
for running a basic construct query $(L,R)$ against a data graph $G$, 
when there is exactly one match from $L$ to $G$.
Now, we define two different calculi for 
running a basic construct query against a data graph $G$
without any assumption on the number of matches.
The \emph{high-level calculus} (Definition~\ref{def:construct-high})
is one ``large'' application of the \poim transformation. 
The \emph{low-level calculus} (Definition~\ref{def:construct-low})
consists of several ``small'' applications of the \poim transformation
followed by a ``merging'' process.
The construction of the high-level calculus is simpler,
while the low-level calculus better fits with
the description of the running process in SPARQL. 
Proposition~\ref{prop:construct-high} in Section~\ref{ssec:construct-high}
and Proposition~\ref{prop:construct-low} in Section~\ref{ssec:construct-low}
prove that both calculi compute the query result.

\begin{definition}
\label{def:construct-result}
Let $(L,R)$ be a basic construct query and $G$ a data graph.
Assume (without loss of generality) %+++ up to iso in ? ou bien sparql???
that $|G|_\B\cap |L|_\B=\emptyset$ and $|G|_\B\cap |R|_\B=\emptyset$. 
Let $m_1,...,m_k$ be the matches from $L$ to $G$.
For each $i=1,...,k$ let $H_i$ be the data graph obtained from $R$
by replacing each variable $x$ in $R$ by $m_i(x)$
and each blank in $R$ by a fresh blank (which means:  
a fresh blank for each blank in $R$ and each $i$ in $\{1,...,k\}$).
The \emph{query result} of applying the basic construct query $(L,R)$
to the data graph $G$ is the data graph $H=H_1 \cup ...\cup H_k$.
\end{definition}

%-------------------------------------------------------------------------------
\subsection{Running Basic Construct Queries in SPARQL} 
\label{ssec:construct-sparql}

The \emph{answer} of a SPARQL CONSTRUCT query over an RDF graph
is defined in \cite[Section~5]{Kostylevetal2015},
based on the seminal paper \cite{PerezAG09}. 
Note that in \cite{Kostylevetal2015} 
literals are allowed as subjects or predicates in RDF graphs,
however for our purpose this does not matter, so that
we stick to the definition of an RDF graph from \cite{rdf},
as reminded at the beginning of Section~\ref{sec:graph}.
Thus, as $\I=\iri\cup\lit$, a data graph $G$ is an RDF graph if and only if
each triple $(s,p,o)$ in $G$ is an RDF triple, 
which means that $s\in\iri\cup \B$ and $p\in\iri$.
Note that for each subset $T$ of $(\IBV)^3$ and each subset $X$ of $|T|$, 
each map $f:X\to \IBV$ gives rise to a map $f':|T|\to \IBV$
such that $f'(x)=f(x)$ when $x\in X$ and $f'(x)=x$ otherwise, 
then $f':|T|\to \IBV$ gives rise to $f'':T\to (\IBV)^3$ which is the
restriction of $(f')^3$ to $T$. There will not be any ambiguity in denoting
$f$ not only the given $f$ but also its extensions $f'$ and $f''$.
For simplicity we consider only the SPARQL queries ``CONSTRUCT $\{R\}$
WHERE $\{L\}$'' such that each variable in $R$ occurs in $L$.
Indeed, variables outside $|L|_\V$ cannot be instantiated in the result,
and according to \cite[Section~16.2]{sparql}, 
if a triple contains an unbound variable,
then that triple is not included in the output RDF graph.
Thus, triples involving a variable in $|R|_\V\!\setminus\!|L|_\V$,
if any, can be dropped. 
It is assumed in \cite{Kostylevetal2015} that there is no blank in $L$.
Indeed, since blank nodes in graph patterns act as variables,
each blank in $L$ can be replaced by a fresh variable.

\begin{definition}[\cite{Kostylevetal2015}]
\label{def:construct-sparql}
A \emph{solution mapping} (or simply a \emph{mapping}) from a basic graph
pattern $L$ to an RDF graph $G$ is a map $\mu:|L|_\V\to \IB$ such that
$\mu(L)\subseteq G$.
When $L$ and $R$ are basic graph patterns such that $|R|_\V\subseteq|L|_\V$,
the \emph{answer} of the SPARQL query ``CONSTRUCT $\{R\}$ WHERE $\{L\}$''
over an RDF graph $G$ 
is the set of all RDF triples $\mu(f_\mu(t))$ 
for all triples $t\in R$ and all mappings $\mu$ from $L$ to $G$, 
where for each $\mu$ a map $f_\mu:|R|_\B\to \B$ is chosen in such a way that the
subsets $f_\mu(|R|_\B)$ of $\B$ are pairwise distinct and all of them are
distinct from $|G|_\B$.
\end{definition}

\begin{theorem} 
\label{theo:construct-sparql}
Let $L$ and $R$ be basic graph patterns with $|L|_\B=\emptyset$
and $|R|_\V\subseteq |L|_\V$.
Then $(L,R)$ is a basic construct query and 
the set of RDF triples in the query result of applying $(L,R)$
to an RDF graph $G$ is isomorphic in $\DGr{\I}$ to the answer
of the SPARQL query ``CONSTRUCT $\{R\}$ WHERE $\{L\}$'' over $G$.  
\end{theorem}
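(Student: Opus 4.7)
The plan is to show that the two constructions coincide up to renaming of the fresh blanks introduced in $R$. I would proceed in three steps: (i) identify matches with solution mappings; (ii) compare the per-match outputs $H_i$ with the per-mapping slices of the SPARQL answer; (iii) assemble the unions and restrict to RDF triples.

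For step (i), the hypothesis $|L|_\B = \emptyset$ trivially gives $|L|_\B \cap |R|_\B = \emptyset$, so together with $|R|_\V \subseteq |L|_\V$ we see that $(L,R)$ is a basic construct query. Since $|L|$ is the disjoint union of $|L|_\I$ and $|L|_\V$, a match $m:L \to G$ (which by Definition~\ref{def:graph-match} fixes $\I$) is determined by the restriction $\mu_m$ of $|m|$ to $|L|_\V$, a map $|L|_\V \to \IB$; the condition that $m$ is a morphism of query graphs translates exactly to $\mu_m(L) \subseteq G$, after extending $\mu_m$ by the identity on $|L|_\I$. Thus $m \mapsto \mu_m$ is a bijection between $\Match(L,G)$ and the set of solution mappings from $L$ to $G$.

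For step (ii), fix a match $m_i$ and its associated mapping $\mu_i$. The data graph $H_i$ from Definition~\ref{def:construct-result} is $P_i^3(R)$ for the map $P_i : |R| \to \IB$ that sends each variable $x \in |R|_\V$ to $\mu_i(x)$, each blank in $|R|_\B$ to a chosen fresh blank, and each resource identifier in $|R|_\I$ to itself. The per-mapping slice of the SPARQL answer is $Q_i^3(R)$ for the map $Q_i : |R| \to \IB$ which agrees with $\mu_i$ on $|R|_\V$, is equal to $f_{\mu_i}$ on $|R|_\B$, and is the identity on $|R|_\I$. The maps $P_i$ and $Q_i$ thus coincide on $|R|_\I \cup |R|_\V$ and both are injective on $|R|_\B$ with images consisting of fresh blanks disjoint from $|G|_\B$ and pairwise disjoint across $i$. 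Hence $H_i$ and $Q_i^3(R)$ coincide up to a bijection between their respective sets of fresh blanks.

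For step (iii), taking the union over $i$ (equivalently over $\mu$) yields $H = \bigcup_i H_i$ on one side and the SPARQL answer on the other. The SPARQL answer keeps only RDF triples by definition, so the comparison is with the subset of RDF triples in $H$. The bijections on fresh blanks built in step (ii) can be assembled into a single bijection on the blanks of $H$, since the fresh choices are pairwise disjoint; invoking Remark~\ref{rem:graph-iso} one then obtains the isomorphism in $\DGr{\I}$. The main subtlety, which I expect to be the sole real obstacle, is the coherent bookkeeping of fresh blanks: both Definitions~\ref{def:construct-result} and~\ref{def:construct-sparql} make arbitrary ``fresh'' choices, and one must verify that across all matches these choices can be aligned to give a single bijection realising the isomorphism. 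A related secondary point is the restriction to RDF triples: if a variable $x \in |R|_\V$ occurs at a predicate position of $R$ while $\mu_i(x)$ is a blank or a literal, the resulting triple lies in $H_i$ but is not an RDF triple --- exactly the triples discarded on the SPARQL side.
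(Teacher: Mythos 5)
Your proposal is correct and follows essentially the same route as the paper's proof, whose entire content is the observation that the (extended) solution mappings $\mu$ are precisely the matches from $L$ to $G$ --- your step (i). Your steps (ii) and (iii), the alignment of the two arbitrary fresh-blank choices into a single blank-renaming isomorphism in $\DGr{\I}$ and the restriction to RDF triples, are details the paper leaves implicit, so spelling them out only adds rigour.
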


\begin{proof}
Clearly $(L,R)$ is a basic construct query and $|G|_\B\cap |L|_\B=\emptyset$. 
We can assume without loss of generality that $|G|_\B\cap |R|_\B=\emptyset$. 
The query result $H$ of applying $(L,R)$ to $G$ is given by 
Definition~\ref{def:construct-result}, based on the set of matches
from $L$ to $G$. 
The Theorem now follows from the fact that the maps $\mu$ (or $\mu''$)
on triples which are associated to the mappings $\mu$ are precisely
the matches from $L$ to $G$.
\end{proof}

\begin{example}
\label{ex:construct-two-V-run}
  
Consider the SPARQL query from Examples~\ref{ex:poim-one-V}
and~\ref{ex:poim-one-V-run}:

\begin{Verbatim}[frame=single,label=Query,fontsize=\verbsize]
 CONSTRUCT { ?x :FN ?name } WHERE { ?x :name ?name }
\end{Verbatim}

\noindent
and let us run this query against the RDF graph $G$:

\begin{Verbatim}[frame=single,label=$G$,fontsize=\verbsize] 
 :alice :name "Alice" . :alice :nick "Lissie" .
 :bob :name "Bob" . :bob :nick "Bobby" 
\end{Verbatim}

\noindent 
There are two matches and we get the RDF graphs $H_1$, $H_2$
and the result $H$: 

\medskip
\noindent 
\begin{minipage}{1.4in}	
\begin{Verbatim}[frame=single,label=$H_1$,fontsize=\verbsize] 
 :alice :FN "Alice"  
\end{Verbatim} 
\end{minipage}
\hfill 
\begin{minipage}{1.5in}	
\begin{Verbatim}[frame=single,label=$H_2$,fontsize=\verbsize] 
 :bob :FN "Bob" 
\end{Verbatim} 
\end{minipage}
\hfill 
\begin{minipage}{1.5in}	
\begin{Verbatim}[frame=single,label=$H$,fontsize=\verbsize] 
 :alice :FN "Alice" . 
 :bob :FN "Bob"  
\end{Verbatim} 
\end{minipage}

\end{example}

\begin{example}
\label{ex:construct-two-B-run}
  
Consider the SPARQL CONSTRUCT query:

\begin{Verbatim}[frame=single,label=Query,fontsize=\verbsize]
 CONSTRUCT { _:c :FN ?name } WHERE { ?x :name ?name }
\end{Verbatim}

\noindent
Note that this query always returns the same result as the query 
from Examples~\ref{ex:poim-one-B} and~\ref{ex:poim-one-B-run}.
Let us run this query against the RDF graph $G$ from
Example~\ref{ex:construct-two-V-run}.
There are two matches and we get the RDF graphs $H_1$, $H_2$
and the result $H$: 

\medskip
\noindent 
\begin{minipage}{1.4in}	
\begin{Verbatim}[frame=single,label=$H_1$,fontsize=\verbsize] 
 _:c :FN "Alice" 
\end{Verbatim} 
\end{minipage}
\hfill 
\begin{minipage}{1.4in}	
\begin{Verbatim}[frame=single,label=$H_2$,fontsize=\verbsize] 
 _:c :FN "Bob"  
\end{Verbatim} 
\end{minipage}
\hfill 
\begin{minipage}{1.5in}	
\begin{Verbatim}[frame=single,label=$H$,fontsize=\verbsize] 
 _:c1 :FN "Alice" . 
 _:c2 :FN "Bob" 
\end{Verbatim} 
\end{minipage}

\end{example}

%-------------------------------------------------------------------------------
\subsection{The High-level Calculus} 
\label{ssec:construct-high}

Let $k$ be a natural number. 
According to Proposition~\ref{prop:graph-coprod}, for each query graph $T$
the query graph $k\,T$, coproduct of $k$ copies of $T$ in $\QGr{\I}$,
can be built (up to isomorphism) as follows:
for each $i\in\{1,...,k\}$ let $T_i$ be a copy of $T$
where each blank and variable has been renamed 
in such a way that there is no blank or variable common to two of the $T_i$'s, 
then the query graph $k\,T$ is the union $T_1 \cup ...\cup T_k$.
Now let $(L,R)$ be a basic construct query. 
The transformation rule $\rhoP_{L,R}=(L\upto{l} K\upfrom{r} R)$
is a cospan in $\QGr{\I}$, that gives rise to the cospan
$k\,\rhoP_{L,R}=(k\,L\upto{k\,l} k\,K\upfrom{k\,r} k\,R)$. 
Since $l$ and $r$ are inclusions, this renaming can be done simultaneously
in the copies of $L$, $K$ and $R$, so that $k\,K = k\,L \cup k\,R$
and $k\,l$ and $k\,r$ are the inclusions. 
Thus, $(k\,L,k\,R)$ is a basic construct query and 
$\rhoP_{k\,L,k\,R}=k\,\rhoP_{L,R}$ is its corresponding transformation rule. 

\begin{definition}
\label{def:construct-high}
Let $(L,R)$ be a basic construct query and $G$ a data graph.
Let $m_1,...,m_k$ be the matches from $L$ to $G$.
Consider the basic construct query $(k\,L,k\,R)$. 
Let $m$ be the match from $k\,L$ to $G$
that coincides with $m_i$ on the $i$-th component of $k\,L$.
The \emph{high-level query result of $(L,R)$ against $G$}
is the result $H_\high$ of applying the \poim transformation map 
$\fpoim_{k\,L,k\,R}$ to the match $m:k\,L\to G$,
as in Diagram~(\ref{diag:high}). 
\begin{equation}
\label{diag:high} 
\xymatrix@R=2pc@C=6pc{
  \ar@{}[rd]|{(PO)} k\,L \ar[r]^{k\,l} \ar[d]_{m} &
  \ar@{}[rd]|{(IM)} k\,K \ar@{.>}[d]^{n} &
  k\,R \ar[l]_{k\,r} \ar@{.>}[d]^{p} \\
  G \ar@{.>}[r]^{g} & D & H_\high \ar@{.>}[l]_{h} \\ 
} 
\end{equation}
\end{definition}

\begin{proposition}
\label{prop:construct-high} 
Let $(L,R)$ be a basic construct query and $G$ a data graph.
The high-level query result of $(L,R)$ against $G$ is isomorphic, 
in the category $\DGr{\I}$, to the query result of $(L,R)$ against $G$.
\end{proposition}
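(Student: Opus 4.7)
The plan is to reduce the high-level calculus directly to Definition~\ref{def:construct-result} via Proposition~\ref{prop:construct-poim}. Write the coproduct $k\,L = L_1\cup\dots\cup L_k$, and similarly $k\,R = R_1\cup\dots\cup R_k$ and $k\,K = K_1\cup\dots\cup K_k$, where the indexed copies are pairwise blank- and variable-disjoint and the inclusions $k\,l$, $k\,r$ decompose accordingly into the inclusions $L_i\hookrightarrow K_i\hookleftarrow R_i$. By construction of $m$, the attribute map $|m|$ restricts on $|L_i|_\V$ to $|m_i|$.

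Next I would invoke Proposition~\ref{prop:construct-poim} on the basic construct query $(k\,L,k\,R)$ and the match $m:k\,L\to G$, which (after first replacing the blanks of $k\,R$ by fresh blanks disjoint from $|G|_\B$, as allowed up to isomorphism in $\QGr{\I}$) yields $H_\high=P^3(k\,R)$, where $P:|k\,R|\to\IB$ equals $|m|$ on $|k\,R|_\V\subseteq|k\,L|_\V$ and the identity elsewhere. Because the copies $R_i$ are pairwise attribute-disjoint outside $\I$, we get $P^3(k\,R)=P^3(R_1)\cup\dots\cup P^3(R_k)$, and on each component $P$ coincides with $|m_i|$ on $|R_i|_\V$ and is the identity on the (already freshly renamed) blanks of $R_i$.

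Now compare with Definition~\ref{def:construct-result}: $H_i$ is obtained from $R$ by replacing each variable $x$ by $m_i(x)$ and each blank by a fresh blank, all fresh blanks being pairwise distinct across $i$ and disjoint from $|G|_\B$. This is exactly the description of $P^3(R_i)$ above, possibly modulo the particular choice of fresh blanks. Hence there is a bijection between $|H_\high|_\B$ and $|H|_\B$, fixing $\I$, that carries $H_\high$ to $H_1\cup\dots\cup H_k=H$ triple by triple; this bijection extends by the identity on $\I$ to an isomorphism in $\DGr{\I}$.

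The only genuine point to verify carefully is that the ``freshness'' conventions agree: the blanks produced by the pushout step of $\fpoim_{k\,L,k\,R}$ are the blanks of $k\,R$, which are simultaneously fresh across the $k$ copies and fresh with respect to $|G|_\B$, exactly matching the ``fresh blank for each blank in $R$ and each $i$'' clause of Definition~\ref{def:construct-result}. Once this book-keeping is in place, the isomorphism in $\DGr{\I}$ is immediate, and the main technical ingredient is just Proposition~\ref{prop:construct-poim} applied componentwise.
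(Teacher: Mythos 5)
Your proof is correct and follows essentially the same route as the paper, whose own proof is just the one-line observation that the statement is a consequence of Proposition~\ref{prop:construct-poim}; you have simply spelled out the componentwise decomposition of $k\,R$ and the blank/variable bookkeeping that the paper leaves implicit.
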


\begin{proof}
This is a consequence of the description of the result of a
\poim transformation from Proposition~\ref{prop:construct-poim}.
% suffisant ? ***
\end{proof}

%-------------------------------------------------------------------------------
\subsection{The Low-level Calculus} 
\label{ssec:construct-low} 

The low-level calculus is a two-step process: 
first one \emph{local result} is obtained for each match,
using a \poim transformation, 
then the local results are glued together in order to form the
\emph{low-level query result}. 

\begin{definition}
\label{def:construct-low}
Let $(L,R)$ be a basic construct query and $G$ a data graph. 
Let $m_1,...,m_k$ be the matches from $L$ to $G$.
For each $i=1,...,k$ let $G_i$ be the image of $m_i$
and let us still denote $m_i$ the restriction $m_i:L\to G_i$. 
The \emph{local result} $H_i$ of $(L,R)$ against $G$ along $m_i$ 
is the result of applying the \poim transformation map $\fpoim_{L,R}$
to the match $m_i:L\to G_i$.
Let $\IB(G)=\I\cup|G|_\B$. 
The \emph{low-level query result} $H_\low$ of $(L,R)$ against $G$ is the
coproduct of the $H_i$'s in the category $\DGr{\IB(G)}$ of data graphs with
morphisms fixing all resource identifiers and the blanks that are in $G$.  
\end{definition} 

\begin{equation}
\label{diag:low}
\xymatrix@R=2pc@C=6pc{
  \ar@{}[rd]|{(PO)} L \ar[r]^{l} \ar[d]_{m_i} &
  \ar@{}[rd]|{(IM)} K \ar@{.>}[d]^{n_i} &
  R \ar[l]_{r} \ar@{.>}[d]^{p_i} \\
  G_i \ar@{.>}[r]^{g_i} & D_i & H_i \ar@{.>}[l]_{h_i} \\ 
} 
\end{equation}

\begin{example}
\label{ex:construct-low} 
Let us apply the low-level calculus to Example~\ref{ex:construct-two-B-run}. 
The match $m_1$ produces $G_1 \to D_1 \from H_1$:

\medskip 
\noindent
\begin{minipage}{1.5in}
\begin{Verbatim}[frame=single,label=$G_1$,fontsize=\verbsize] 
 :alice :name "Alice" 
\end{Verbatim} 
\end{minipage}
\hfill 
  $\upto{g_1}$
\hfill 
\begin{minipage}{1.6in}	
\begin{Verbatim}[frame=single,label=$D_1$,fontsize=\verbsize] 
 :alice :name "Alice" .
 _:c :FN "Alice" 
\end{Verbatim}
\end{minipage}
\hfill 
  $\upfrom{h_1}$
\hfill 
\begin{minipage}{1.3in}	
\begin{Verbatim}[frame=single,label=$H_1$,fontsize=\verbsize] 
 _:c :FN "Alice"  
\end{Verbatim} 
\end{minipage}

\noindent
and similarly the match $m_2$ produces $G_2 \to D_2 \from H_2$:

\medskip 
\noindent
\begin{minipage}{1.5in}
\begin{Verbatim}[frame=single,label=$G_2$,fontsize=\verbsize] 
 :bob :name "Bob" 
\end{Verbatim} 
\end{minipage}
\hfill 
  $\upto{g_2}$
\hfill 
\begin{minipage}{1.6in}	
\begin{Verbatim}[frame=single,label=$D_2$,fontsize=\verbsize] 
 :bob :name "Bob" . 
 _:c :FN "Bob" 
\end{Verbatim}
\end{minipage}
\hfill 
  $\upfrom{h_2}$
\hfill 
\begin{minipage}{1.3in}	
\begin{Verbatim}[frame=single,label=$H_2$,fontsize=\verbsize] 
 _:c :FN "Bob" 
\end{Verbatim} 
\end{minipage}	

\smallskip
\noindent
Finally the query result $H_\low$, which is the coproduct of $H_1$ and $H_2$
in category $\DGr{\IB(G)}$, is isomorphic to $H$ from
Example~\ref{ex:construct-two-B-run}.

\begin{Verbatim}[frame=single,label=$H_\low$,fontsize=\verbsize] 
 _:c1 :FN "Alice" . _:c2 :FN "Bob"  
\end{Verbatim} 

\end{example}

\begin{example}
\label{ex:avecrecollement}

This example illustrates how local results are ``merged'' to compute
the result in the low-level calculus.
The SPARQL query is the following: 

\begin{Verbatim}[frame=single,label=Query,fontsize=\verbsize]
CONSTRUCT { ?x :acquaintedWith ?z } WHERE { ?x :knows ?y . ?y :knows ?z }
\end{Verbatim}

\vspace{-5pt}
\noindent 
Its corresponding transformation rule is: 

\medskip 
\noindent
\begin{minipage}{1.1in}
\begin{Verbatim}[frame=single,label=$L$,fontsize=\verbsize] 
 ?x :knows ?y . 
 ?y :knows ?z    
\end{Verbatim} 
\end{minipage}
\hfill 
  $\upto{l}$
\hfill 
\begin{minipage}{1.7in}	
\begin{Verbatim}[frame=single,label=$K$,fontsize=\verbsize] 
 ?x :knows ?y . 
 ?y :knows ?z .     
 ?x :acquaintedWith ?z  
\end{Verbatim}
\end{minipage}
\hfill 
  $\upfrom{r}$
\hfill 
\begin{minipage}{1.6in}	
\begin{Verbatim}[frame=single,label=$R$,fontsize=\verbsize] 
 ?x :acquaintedWith ?z  
\end{Verbatim} 
\end{minipage}

\smallskip
\noindent 
This query is applied to the following graph $G$:

\begin{Verbatim}[frame=single,label=$G$,fontsize=\verbsize] 
 :alice :knows :bob . :bob :knows _:c . _:c :knows :alice . :cathy :knows :david 
\end{Verbatim}

\noindent 
There are three matches $m_1$, $m_2$, $m_3$,
thus three local results $H_1$, $H_2$, $H_3$: 

\medskip
\noindent 
\begin{minipage}{1.8in}
\begin{Verbatim}[frame=single,label=$H_1$,fontsize=\verbsize]
:alice :acquaintedWith _:c 
\end{Verbatim}
\end{minipage}
\hfill
\begin{minipage}{1.7in}
\begin{Verbatim}[frame=single,label=$H_2$,fontsize=\verbsize]
_:c :acquaintedWith :bob  
\end{Verbatim}
\end{minipage}
\hfill
\begin{minipage}{1.9in}
\begin{Verbatim}[frame=single,label=$H_3$,fontsize=\verbsize]
:bob :acquaintedWith :alice 
\end{Verbatim}
\end{minipage}

\medskip
\noindent 
The blank \mbox{{\tt \_:c}} in $H_1$ and $H_2$ is not duplicated in
the coproduct $H_\low$ because it comes from $G$. 
Thus the result is: 
	
\begin{Verbatim}[frame=single,label=$H_\low$,fontsize=\verbsize] 
 :alice :acquaintedWith _:c . _:c :acquaintedWith :bob . :bob :acquaintedWith :alice 
\end{Verbatim}

\end{example}

\begin{proposition}
\label{prop:construct-low}
Let $(L,R)$ be a basic construct query and $G$ a data graph.
The low-level query result of $(L,R)$ against $G$ is isomorphic, 
in the category $\DGr{\I}$, to the query result of $(L,R)$ against $G$.
\end{proposition}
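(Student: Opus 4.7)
The plan is to compare the low-level construction with Definition~\ref{def:construct-result} by first giving an explicit description of each local result $H_i$ via Proposition~\ref{prop:construct-poim}, and then computing the coproduct in $\DGr{\IB(G)}$ explicitly via Proposition~\ref{prop:graph-coprod}. The comparison should then be essentially a bookkeeping argument on which attributes come from which source.

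First, I would fix the convention, as in Definition~\ref{def:construct-result}, that $|G|_\B \cap |L|_\B = \emptyset$ and $|G|_\B \cap |R|_\B = \emptyset$ (this is allowed up to isomorphism in $\QGr{\I}$ by Remark~\ref{rem:graph-iso}). For each match $m_i : L \to G_i$, Proposition~\ref{prop:construct-poim} gives that the local result $H_i = \fpoim_{L,R}(m_i)$ equals, up to isomorphism in $\QGr{\I}$, the graph $P_i^3(R)$, where $P_i(x) = |m_i|(x)$ for $x \in |R|_\V$ and $P_i(x) = x$ otherwise. Since $|R|_\V \subseteq |L|_\V$, every variable in $R$ is instantiated, so $H_i$ is indeed a data graph; and its blanks are precisely either blanks of $R$ (unchanged by $P_i$) or blanks of $G_i$ arising as images of variables under $m_i$. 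In particular $|H_i|_\B \cap \ol{\IB(G)} \subseteq |R|_\B$, where $\ol{\IB(G)} = \B \setminus |G|_\B$.

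Next, I would compute the coproduct $H_\low$ of the $H_i$'s in the category $\DGr{\IB(G)}$ using Proposition~\ref{prop:graph-coprod}. That proposition says the coproduct is obtained (up to isomorphism in $\DGr{\IB(G)}$) by renaming the attributes of each $H_i$ that lie in $\ol{\IB(G)}$ to fresh attributes outside $\IB(G)$, pairwise disjointly across $i$, and then taking the set-theoretic union. By the previous paragraph, the attributes to be renamed are precisely the blanks of $R$ occurring in each $H_i$; after renaming they become fresh blanks, one for each occurrence of a blank of $R$ in each $H_i$. The resulting union is exactly the construction of $H$ given in Definition~\ref{def:construct-result}: replace each variable $x$ of $R$ by $m_i(x)$, replace each blank of $R$ by a fresh blank (per $i$), and take the union over $i$.

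The hard part will not be the algebra but keeping straight the two ``scopes'' of blanks: blanks inherited from $G$ (through $m_i$ acting on $|R|_\V$) must remain identified across different $i$'s, while blanks originating from $R$ must be duplicated into fresh blanks per $i$. The right category for making this automatic is $\DGr{\IB(G)}$, which fixes the former class while allowing the renaming of the latter. Once this is observed, the isomorphism $H_\low \simeq H$ in $\DGr{\IB(G)}$ follows directly from the two propositions quoted above, and restricting to the coarser equivalence in $\DGr{\I}$ gives the statement.
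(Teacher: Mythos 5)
Your proposal is correct and follows essentially the same route as the paper, whose proof simply cites Proposition~\ref{prop:construct-poim} (to describe each local result $H_i$) and Proposition~\ref{prop:graph-coprod} (to describe the coproduct in $\DGr{\IB(G)}$) and leaves the bookkeeping implicit; you have merely spelled out that bookkeeping, including the key observation that the only attributes renamed by the coproduct are the blanks inherited from $R$, while blanks inherited from $G$ stay identified across the $H_i$'s. The one wording to tighten is ``one for each occurrence of a blank of $R$ in each $H_i$'': the renaming is per attribute and per index $i$, not per triple occurrence.
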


\begin{proof}
This is a consequence of the description of the result of a
\poim transformation from Proposition~\ref{prop:construct-poim}
and the description of coproducts in $\DGr{\I}$
from Proposition~\ref{prop:graph-coprod}.
% suffisant ? ***
\end{proof}

%-------------------------------------------------------------------------------
%-------------------------------------------------------------------------------
\section{Running Basic Select Queries} 
\label{sec:select}  

The CONSTRUCT query form of SPARQL returns a data graph
whereas the SELECT query form returns a table, 
like the SELECT query form of SQL.
Both in SQL and in SPARQL, it is well-known that such tables 
are not exactly relations in the mathematical sense:
in mathematics a relation on $X_1,...,X_n$ is a subset of the cartesian product
$X_1\!\times\!...\!\times\! X_n$, while the result of a SELECT query in SQL or SPARQL
is a multiset of elements of $X_1\!\times\!...\!\times\! X_n$. 
In order to avoid ambiguities, such a multiset 
is called a \emph{multirelation} on $X_1,...,X_n$.
When all $X_i$'s are the same set $X$ it is called a multirelation 
of arity $n$ on $X$.

A SPARQL query such as ``SELECT $?s_1,...,?s_n$ WHERE \{$L$\}'' is called
\emph{basic} when $L$ is a basic graph pattern and $?s_1,...,?s_n$
are distinct variables. 
We generalize this situation by defining a \emph{basic select query}
as a pair $(L,S)$ where $L$ is a finite query graph and $S$ is a finite set 
of variables. Then we associate to each basic select query $(L,S)$
a basic construct query $(L,\gr(S))$. Finally we define the result of
running the basic select query $(L,S)$ against a data graph
$G$ from the data graph $H$ result of running the basic construct query
$(L,\gr(S))$ against $G$.
This process is first described on an example. 

\begin{example} 
\label{ex:select-construct}
  
Consider the following SPARQL SELECT query:

\begin{Verbatim}[frame=single,label=SELECT Query,fontsize=\verbsize]
 SELECT ?nameX ?nameY
 WHERE { ?x :knows ?y . ?x :name ?nameX . ?y :name ?nameY }
\end{Verbatim}

\noindent
We associate to this SELECT query the following CONSTRUCT query:

\begin{Verbatim}[frame=single,label=CONSTRUCT Query,fontsize=\verbsize]
 CONSTRUCT { _:r nameX ?nameX . _:r nameY ?nameY }
 WHERE { ?x :knows ?y . ?x :name ?nameX . ?y :name ?nameY }
\end{Verbatim}

\noindent
Let us run this  CONSTRUCT query against the RDF graph $G$: 

\begin{Verbatim}[frame=single,label=$G$,fontsize=\verbsize] 
 _:alice :name "Alice" . _:bob :name "Bob" . _:bobby :name "Bob" . _:cathy :name "Cathy" .
 _:alice :knows _:bob . _:alice :knows _:bobby . _:alice :knows _:cathy 
\end{Verbatim}

\noindent
The result is the RDF graph $H$: 

\begin{Verbatim}[frame=single,label=$H$,fontsize=\verbsize] 
 _:l1 nameX "Alice" . _:l1 nameY "Bob" .
 _:l2 nameX "Alice" . _:l2 nameY "Bob" .
 _:l3 nameX "Alice" . _:l3 nameY "Cathy" 
\end{Verbatim}

\noindent
{F}rom the data graph $H$ we get the following table, 
by considering each blank \blank{$l_i$} in $H$
as the identifier of a line in the table. Note that the set of triples
in $H$ becomes a multiset of lines in the table.
This table is indeed the answer of the SPARQL SELECT query over $G$.

\begin{center}
  \verbsize
  \begin{tabular}{|l|l|}
  \hline
\multicolumn{1}{|c|}{\;\textbf{nameX}\;} & 	\multicolumn{1}{|c|}{\;\textbf{nameY}\;} \\	
  \hline
\mbox{{\tt "Alice"}} & \mbox{{\tt "Bob"}} \\ 
  \hline
\mbox{{\tt "Alice"}} & \mbox{{\tt "Bob"}} \\ 
   \hline
\mbox{{\tt "Alice"}} & \mbox{{\tt "Cathy"}} \\ 
  \hline
  \end{tabular}
  \normalsize
\end{center}

\end{example}

In order to generalize Example~\ref{ex:select-construct} we have to
define a transformation from each SELECT query to a CONSTRUCT query 
and a transformation from the result of this CONSTRUCT query
to the result of the given SELECT query. 
For this purpose,
we first define \emph{relational} data graphs (Definition~\ref{def:select-data})
and \emph{relational} query graphs (Definition~\ref{def:select-query}). 

\begin{definition}
\label{def:select-data}
A \emph{relational data graph} on a finite set $\{s_1,...,s_n\}$ of
resource identifiers 
is a data graph made of triples $(\_:l_i,s_j,y_{i,j})$ 
where the $\_:l_i$'s are pairwise distinct blanks 
and the $y_{i,j}$'s are in $\IB$, 
for $j\in\{1,...,n\}$ and $i$ in some finite set $\{1,...,k\}$.
\end{definition}

\begin{proposition}
\label{prop:select-relation}
Each relational data graph $S\!=\!\{(\_\!:\!l_i,s_j,y_{i,j})\}_{i\in\{1...k\},j\in\{1...n\}}$
determines a multirelation $\rel(S)=\{(y_{i,1},...,y_{i,n})\}_{i\in\{1...k\}}$
of arity $n$ on $\IB$.
\end{proposition}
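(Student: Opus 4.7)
The statement is essentially a well-definedness claim for the construction $\rel(S)$. My plan is to verify that from the data $S$ alone we can canonically extract a multirelation on $\IB$ of arity $n$, by checking that (i) each indexed tuple $(y_{i,1},\dots,y_{i,n})$ is unambiguously determined by $S$ and the chosen ordering of $s_1,\dots,s_n$, and (ii) the passage from the set of triples $S$ to the multiset of tuples is a genuine multirelation.

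First I would fix $i\in\{1,\dots,k\}$ and argue that the blank $\_:l_i$ acts as a line identifier. Since by Definition~\ref{def:select-data} the blanks $\_:l_i$ are pairwise distinct, the subset of triples of $S$ whose subject is $\_:l_i$ is well defined. Within this subset, for each $j\in\{1,\dots,n\}$ there is exactly one triple whose predicate is $s_j$: existence holds because $S$ contains the triple $(\_:l_i,s_j,y_{i,j})$ by hypothesis, and uniqueness holds because $S$ is a set of triples on $\IB$ (a subset of $\IB^3$, by Definition~\ref{def:graph-graph}) and the resource identifiers $s_1,\dots,s_n$ are pairwise distinct. Hence each entry $y_{i,j}\in\IB$ is unambiguously recovered from $S$ as the unique object of the triple with subject $\_:l_i$ and predicate $s_j$, and the tuple $(y_{i,1},\dots,y_{i,n})\in\IB^n$ is well defined.

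Next I would assemble the multirelation. Letting $i$ range over $\{1,\dots,k\}$ yields a family of tuples in $\IB^n$; by collecting these into a multiset I obtain
\[
  \rel(S)=\{\!\!\{(y_{i,1},\dots,y_{i,n})\mid i\in\{1,\dots,k\}\}\!\!\},
\]
which is exactly a multirelation of arity $n$ on $\IB$ in the sense recalled at the beginning of Section~\ref{sec:select}. Note that taking a multiset, rather than a set, is essential: two distinct line identifiers $\_:l_i$ and $\_:l_{i'}$ with the same object values contribute two different entries to $\rel(S)$, so multiplicities are faithfully tracked.

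The only delicate point, which I would address last, is independence from the auxiliary data used to present $S$. The multirelation depends only on $S$ and on the ordered list $s_1,\dots,s_n$: changing the indexing $\{1,\dots,k\}$ of the blanks (or renaming the blanks themselves) permutes the family of tuples but leaves the underlying multiset unchanged, since the $\_:l_i$'s do not appear as entries in the tuples. Thus $\rel(S)$ is canonically associated to $S$ once the ordering of $s_1,\dots,s_n$ is fixed. I do not expect any serious obstacle here; the whole argument is a bookkeeping verification that the triple-structure of $S$ and the distinctness of the line-identifying blanks suffice to recover the multirelation.
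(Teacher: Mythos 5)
Your proposal is correct and follows the same route as the paper, whose entire proof is the one-line remark that the result is clear from the definitions of relational data graphs and multirelations; you have simply spelled out the bookkeeping (distinctness of the line-identifying blanks, uniqueness of each $(\_:l_i,s_j)$-triple, and passage to a multiset) that the paper leaves implicit. No gap.
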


\begin{proof}
This result is clear from the definitions of relational data graphs 
and multirelations. 
\end{proof}

\begin{example}
\label{ex:select-rel-data}
In Example~\ref{ex:select-construct} the graph $H$ is a relational data graph
on the set $\{\mbox{{\tt nameX}},\mbox{{\tt nameY}}\}$ 
and the table result of the SELECT query is its corresponding multirelation.

\end{example}

Assume that each variable in SPARQL is written as ``$?s$''
for some string $s$. 

\begin{definition}
\label{def:select-query}
The \emph{relational query graph} on a finite set of variables
$S=\{?s_1,...,?s_n\}$
is the query graph $\gr(S)$ made of the triples $(\_:r,s_j,?s_j)$
where $j\in \{1,...,n\}$ and $\_:r$ is a blank.
Note that $\gr(S)$ is uniquely determined by $S$
up to isomorphism in $\QGr{\IV}$.
\end{definition}

\begin{example}
\label{ex:select-rel-query}
Here is the relational query graph on
$\{\mbox{{\tt ?nameX}},\mbox{{\tt ?nameY}}\}$:

\begin{center}
\begin{minipage}{3in}
\begin{Verbatim}[frame=single,fontsize=\verbsize] 
 _:r nameX ?nameX . _:r nameY ?nameY 
\end{Verbatim}
\end{minipage}
\end{center}

\end{example}

Below, we show how a basic select query can be encoded as a 
basic construct query (Definition~\ref{def:select-select})
and we prove that the result of the given select query is easily recovered
from the result of its associated construct query (Theorem~\ref{theo:select-sparql}).

\begin{definition}
\label{def:select-select}
A \emph{basic select query} is a pair $(L,S)$ where $L$ is a finite
query graph and $S$ is a finite set of variables 
such that each variable in $S$ occurs in $L$.
The \emph{basic construct query associated to} a basic select query $(L,S)$
is $(L,\gr(S))$ where $\gr(S)$ is the relational query graph on $S$. 
\end{definition}

\begin{proposition} 
\label{prop:select-result}
Let $(L,S)$ be a basic select query and $G$ a data graph.
The query result of $(L,\gr(S))$ against $G$ is a relational data graph $H$.
More precisely, let $S=\{?s_1,...,?s_n\}$ and let $m_1,...,m_k$ be
the matches from $L$ to $G$, then $H$ is the set of triples
$(\_:l_i,s_j,m_i(?s_j))$ where $i\in \{1,...,k\}$, $j\in \{1,...,n\}$,
and the blanks $\_:l_1,...,\_:l_k$ are pairwise distinct. 
\end{proposition}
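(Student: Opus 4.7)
The plan is to unfold the definitions and observe that the very shape of the relational query graph $\gr(S)$ forces the query result to have the desired tabular form. First I would recall from Definition~\ref{def:select-select} that $\gr(S)$ consists of the triples $(\_\!:\!r,s_j,?s_j)$ for $j\in\{1,\ldots,n\}$, so $|\gr(S)|_\B=\{\_\!:\!r\}$ is a singleton and $|\gr(S)|_\V=\{?s_1,\ldots,?s_n\}$. Since each $?s_j$ occurs in $L$ by the definition of a basic select query, we have $|\gr(S)|_\V\subseteq|L|_\V$, so $(L,\gr(S))$ is a genuine basic construct query and Definition~\ref{def:construct-result} applies.

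Next I would apply Definition~\ref{def:construct-result} directly. For each match $m_i:L\to G$ we form $H_i$ by substituting each variable $?s_j$ in $\gr(S)$ by $m_i(?s_j)$ and each blank of $\gr(S)$ by a fresh blank; since the only blank is $\_\!:\!r$, a single fresh blank $\_\!:\!l_i$ is introduced. Thus
\[
H_i \;=\; \{(\_\!:\!l_i,\,s_j,\,m_i(?s_j)) \mid j\in\{1,\ldots,n\}\},
\]
and by definition the query result is $H=H_1\cup\cdots\cup H_k$, which yields precisely the set of triples stated in the proposition. The freshness clause in Definition~\ref{def:construct-result} (``a fresh blank for each blank in $R$ and each $i$'') guarantees that $\_\!:\!l_1,\ldots,\_\!:\!l_k$ are pairwise distinct and disjoint from $|G|_\B$.

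Finally, to conclude that $H$ is a relational data graph on $\{s_1,\ldots,s_n\}$ in the sense of Definition~\ref{def:select-data}, I would just check that the enumeration above matches the template $(\_\!:\!l_i,s_j,y_{i,j})$ with $y_{i,j}=m_i(?s_j)\in\IB$ (this follows because a match sends variables into $\IB$), with the $s_j$'s being resource identifiers, and with the $\_\!:\!l_i$'s pairwise distinct as noted.

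There is essentially no obstacle here: the statement is a direct bookkeeping consequence of Definition~\ref{def:construct-result} once one notices that $\gr(S)$ contains exactly one blank and exactly the variables of $S$. The only mildly delicate point is the freshness/distinctness of the line identifiers $\_\!:\!l_i$, which however is explicitly part of the definition of the query result rather than something requiring a separate argument.
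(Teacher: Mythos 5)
Your proposal is correct and follows essentially the same route as the paper's proof: both unfold Definition~\ref{def:construct-result} applied to $\gr(S)=\{(\_\!:\!r,s_j,?s_j)\}_j$, obtaining $H_i=\{(\_\!:\!l_i,s_j,m_i(?s_j))\}_j$ with pairwise distinct fresh blanks $\_\!:\!l_i$. Your version merely spells out a few bookkeeping details (the singleton blank set, the check against Definition~\ref{def:select-data}) that the paper leaves implicit.
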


\begin{proof}
We have $\gr(S)=\{(\_:r,s_j,?s_j)\}_{j\in \{1,...,n\}}$, so that 
according to Definition~\ref{def:construct-result}  
the query result of $(L,\gr(S))$ against $G$ is $H_1 \cup ...\cup H_k$
where $H_i=\{(\_:l_i,s_j,m_i(?s_j))\}_{j\in \{1,...,n\}}$
and the blanks $\_:l_1,...,\_:l_k$ are pairwise distinct.
\end{proof}

Because of Proposition~\ref{prop:select-result} we can state the
following definition. 

\begin{definition}
\label{def:select-result}
Let $(L,S)$ be a basic select query and $G$ a data graph.
Let $H$ be the query result of $(L,\gr(S))$ against $G$. 
The \emph{query result} of $(L,S)$ against $G$
is the multirelation $\rel(H)$ on $\IB$. 
\end{definition}
  
The \emph{answer}, or \emph{evaluation}, of a SPARQL SELECT query over an
RDF graph is defined in \cite[Section~2.3]{KKC} as follows.

\begin{definition}[\cite{KKC}]
\label{def:select-sparql}
Let $L$ be a basic graph pattern of SPARQL, 
$S=\{?s_1,...,?s_n\}$ a finite set of variables included in $|L|_\V$
and let $G$ be an RDF graph. 
The \emph{answer} of the SPARQL query ``SELECT $?s_1,...,?s_n$ WHERE \{$L$\}''
over $G$ is the multiset with elements the restrictions
$\mu|_S$ of the mappings $\mu$ from $L$ to $G$ to the subset $S$ of $|L|_\V$,
each $\mu|_S$ with multiplicity the number of corresponding $\mu$'s.
\end{definition}

\begin{theorem}
\label{theo:select-sparql}
Let $L$ be a basic graph pattern of SPARQL
% ??? with $|L|_\B=\emptyset$ - ça dépend de notre référence pour sparql 
and $S=\{?s_1,...,?s_n\}$
a finite set of variables included in $|L|_\V$ and let $G$ be an RDF graph. 
Then the query result of $(L,\gr(S))$ against $G$ is the answer
of the SPARQL query ``SELECT $?s_1,...,?s_n$ WHERE \{$L$\}'' over $G$.  
% up to iso ?
\end{theorem}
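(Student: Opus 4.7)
The plan is to obtain the theorem as an essentially formal consequence of Proposition~\ref{prop:select-result}, Proposition~\ref{prop:select-relation}, and the correspondence between SPARQL mappings and matches already used in Theorem~\ref{theo:construct-sparql}. I will chase the definitions in two strands, one starting from the construct-based definition of the query result of $(L,S)$ and one starting from the SPARQL answer, and then argue that the two multisets obtained in the end are the same via the match–mapping bijection.

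First, I would apply Proposition~\ref{prop:select-result} to exhibit the query result of $(L,\gr(S))$ against $G$ as the data graph
\[
H = \{(\_:l_i,\, s_j,\, m_i(?s_j)) \mid i \in \{1,\dots,k\},\; j \in \{1,\dots,n\}\},
\]
where $m_1,\dots,m_k$ enumerate the matches from $L$ to $G$ and the blanks $\_:l_i$ are pairwise distinct. By Proposition~\ref{prop:select-relation}, $\rel(H)$ is then the multirelation of arity $n$ on $\IB$ whose elements are the tuples $(m_i(?s_1),\dots,m_i(?s_n))$, indexed by $i$. Crucially, because the blanks $\_:l_i$ are pairwise distinct, each match $m_i$ contributes exactly one tuple, even when several matches yield the same tuple of values: the multiplicity in $\rel(H)$ is the number of matches producing that tuple.

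Second, I would unfold the SPARQL answer according to Definition~\ref{def:select-sparql}: it is the multiset whose elements are the restrictions $\mu|_S$ of the mappings $\mu$ from $L$ to $G$, with multiplicity equal to the number of $\mu$'s yielding a given $\mu|_S$. Invoking the bijection recalled in the proof of Theorem~\ref{theo:construct-sparql}, which sends each mapping $\mu:|L|_\V \to \IB$ to the match $\mu'':L \to G$ (the unique extension to a morphism of query graphs fixing $\I$) and conversely each match $m$ to its restriction $m\!\restriction_{|L|_\V}$, the multiset of $\mu|_S$'s is in one-to-one, multiplicity-preserving correspondence with the multiset of tuples $(m_i(?s_1),\dots,m_i(?s_n))$ obtained from the matches $m_i$, once $\mu|_S$ is identified with the tuple $(\mu(?s_1),\dots,\mu(?s_n))$ along the chosen enumeration of $S$.

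The main obstacle I anticipate is not mathematical depth but book-keeping: making sure that \emph{multiplicities} match on both sides. The SPARQL definition counts the number of mappings $\mu$ producing a given restriction, while $\rel(H)$ counts tuples indexed by matches; the bijection between mappings and matches is one-to-one, so these counts coincide. A secondary point to check is the identification of the partial function $\mu|_S$ with a tuple in $\IB^n$, which requires fixing the enumeration $?s_1,\dots,?s_n$ of $S$; this is exactly the enumeration already used both in Definition~\ref{def:select-sparql} and in the description of $\gr(S)$ in Definition~\ref{def:select-query}, so the identification is canonical and the two multisets coincide, yielding the theorem.
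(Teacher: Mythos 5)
Your proposal is correct and follows essentially the same route as the paper's own proof, which likewise combines Proposition~\ref{prop:select-result} with the bijection between SPARQL mappings and matches; you simply spell out the multiplicity book-keeping that the paper leaves implicit. No gaps.
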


\begin{proof}
Since the mappings from $L$ to $G$ correspond bijectively to the matches
from $L$ to $G$, the result follows from Proposition~\ref{prop:select-result}.
\end{proof}

%-------------------------------------------------------------------------------
%-------------------------------------------------------------------------------
\section{Conclusion} 
\label{sec:discuss}

In this paper, we bet to base our work entirely on algebraic
theories behind graphs and their transformations.  Suitable categories
of data graphs and query graphs are defined and the definition of
morphisms of query graphs clarifies the difference between blank nodes
and variables. Besides, we propose to encode CONSTRUCT and SELECT
queries as graph rewrite rules, of the form
$L \rightarrow L \cup R \leftarrow R$, and define their operational
semantics following a novel algebraic approach called POIM.  From the
proposed semantics, blanks in $L$ play the same role as variables and
thus can be replaced by variables, whereas blanks in $R$ are used for
creating new blanks in the result of a CONSTRUCT query.  As in
\cite{Kostylevetal2015}, we focus on the CONSTRUCT query form as the
fundamental query form.  In addition we propose a translation of the
SELECT queries as CONSTRUCT queries compatible with their operational
semantics. One of the benefits of using category theory is that
coding of data graphs as sets of triples is not that important. The
results we propose hold for all data models which define a category
with enough colimits. For intance, one may expect to define data graph
categories for the well-known Edge-labelled graphs or Property graphs
\cite{Robinson2013}.
The proposed operational semantics can clearly benefit from all
results regarding efficient graph matching implementation, see
e.g. \cite{FanLMWW10}. 

Among related works, a category of RDF graphs as well as their
transformations have been proposed in \cite{BraatzB08}. The authors
defined objects of RDF graph categories of the form
$(G_{\mathrm{Blank}}, G_{\mathrm{Triple}})$ where $G_{\mathrm{Blank}}$
and $G_{\mathrm{Triple}}$ denote respectively the set of blank nodes
and the set of triples of graph $G$. In addition, the morphisms of
such RDF graphs associate blank nodes to blank nodes.  These
definitions of object and morphisms are different from
ours. Associating a blank node to any element of a triple, as our
homomorphisms do, is called instantiation in \cite{BraatzB08}. The
authors did not tackle the problem of answering SPARQL queries but
rather proposed an algebraic approach to transform RDF graphs. Their
approach, called MPOC-PO, is inspired from DPO where the first square
is replaced by a ``minimal'' pushout complement (MPOC). However, MPOC-PO
drastically departs from the POIM transformations we propose. This
difference is quite natural since the two approaches have different
aims~: the POIM approach is dedicated to implement SPARQL
queries while the MPOC-PO is intended to transform RDF graphs in
general. However, MPOC-PO and DPO approaches are clearly not tailored
to implement CONSTRUCT or SELECT queries since the (minimal) pushout
complements always include parts of large data graphs which are not
matched by the queries while such parts are not involved in the query
answers. The same remark applies also to graph transformations where
rules are cospans as in \cite{EhrigHP09}.

The image factorization part of POIM steps does not yield, in general,
the same result as a pushout complement or a pullback complement constructs. For
example, let us consider the following query

\begin{Verbatim}[frame=single,label=Query,fontsize=\verbsize]
 CONSTRUCT {?x :pred :bob} WHERE {?x :pred ?y. ?z :pred :bob}
\end{Verbatim}

\noindent
The POIM rule $L \rightarrow K \leftarrow R$ corresponding to this
query and its application to a graph $G$ consisting of one triple $(:alice, :pred, :bob)$ are depicted below.

\begin{center}

\begin{tabular}{ccccc} 
\begin{minipage}{1.5in}
\begin{Verbatim}[frame=single,label=$L$,fontsize=\verbsize]
?x :pred ?y .
?z :pred :bob
\end{Verbatim}
\end{minipage} & 
  $\lupto{l}$ & 
\begin{minipage}{1.5in}
\begin{Verbatim}[frame=single,label=$K$,fontsize=\verbsize]
?x :pred ?y . 
?z :pred :bob .
?x :pred :bob
\end{Verbatim}
\end{minipage} &  
  $\lupfrom{r}$ & 
\begin{minipage}{1.5in}
\begin{Verbatim}[frame=single,label=$R$,fontsize=\verbsize]
?x :pred :bob
\end{Verbatim}
\end{minipage} \\
  \rotatebox{-90}{$\lupto{\rotatebox{90}{{\scriptsize m}}}$} & ~ 
  & \rotatebox{-90}{$\lupto{\rotatebox{90}{{\scriptsize n}}}$}  & ~   
  & \rotatebox{-90}{$\lupto{\rotatebox{90}{{\scriptsize p}}}$} \\[3em]
\begin{minipage}{1.5in}
\begin{Verbatim}[frame=single,label=$G$,fontsize=\verbsize]
:alice :pred :bob
\end{Verbatim}
\end{minipage} & 
  $\lupto{g}$ & 
\begin{minipage}{1.5in}
\begin{Verbatim}[frame=single,label=$D$,fontsize=\verbsize]
:alice :pred :bob
\end{Verbatim}
\end{minipage} &  
  $\lupfrom{h}$ & 
\begin{minipage}{1.5in}
\begin{Verbatim}[frame=single,label=$H$,fontsize=\verbsize]
:alice :pred :bob
\end{Verbatim}
\end{minipage} \\

\end{tabular}
\end{center}

\vspace{0.3cm}
\noindent
The reader can easily check that the right square is neither a pushout
nor a pullback.

In \cite{AJD2015}, even if the authors use a categorical setting,
their objectives and results depart from ours as they mainly encode
every ontology as a category. However, Graph Transformations have
already been used in modeling relational databases, see
e.g. \cite{AndriesE96} where a visual and textual hybrid query
language has been proposed. In \cite{KieselSW95}, the main features of
a data management system based on graphs have been proposed where the
underlying typed attributed data graphs are different from those of
RDF and SPARQL. In \cite{AlqahtaniH18}, triple graph grammars (TGG)
have also been used for data modelling and model transformation rules
to be compiled into Graph Data Bases code for execution.

In this paper, we consider basic graphs and queries, which form 
a significant kernel of RDF and SPARQL. 
Future work includes the generalization of the present study to other 
features of RDF and SPARQL in order to encompass general SPARQL queries.
We also consider investigating RDF Schema \cite{rdfs} and ontologies
from this point of view.

%\bibliographystyle{eptcs}
%\bibliography{biblio}

\end{document}